\begin{document}
%<<<<<<<<<<< ennumeration of eqns section wise>>>>>>>>>>>>>>>>>>>

\renewcommand\theequation{\arabic{section}.\arabic{equation}}
\catcode`@=11 \@addtoreset{equation}{section}
%<<<<<<<<<<<<<<<<<<<<<<<<<<<<<<<<<>>>>>>>>>>>>>>>>>>>>>>>>>>>>>>>>>
\newtheorem{axiom}{Definition}[section]
\newtheorem{theorem}{Theorem}[section]
\newtheorem{axiom2}{Example}[section]
\newtheorem{claim}{Claim}[section]
\newtheorem{lem}{Lemma}[section]
\newtheorem{prop}{Proposition}[section]
\newtheorem{cor}{Corollary}[section]
\newcommand{\be}{\begin{equation}}
\newcommand{\ee}{\end{equation}}

\newcommand{\equal}{\!\!\!&=&\!\!\!}
\newcommand{\lmat}{\left(\begin{array}{cccccc}}
\newcommand{\rmat}{\end{array}\right)}
%Page length commands go here in the preamble
\setlength{\oddsidemargin}{-0.10in} % Left margin of 1 in + 0 in = 1 in
\setlength{\textwidth}{6.65in}   % Right margin of 8.5 in - 1 in - 6.5 in = 1 in
\setlength{\topmargin}{-.50in}  % Top margin of 2 in -0.75 in = 1 in
\setlength{\textheight}{8.8in}  % Lower margin of 11 in - 9 in - 1 in = 1 in

%\begin{document}

\title{Jacobi-Maupertius metric and Kepler equation \vspace{-0.1cm} \author{ Sumanto Chanda$^1$, G.W. Gibbons$^2$, Partha Guha$^1$ }}

\maketitle
\thispagestyle{empty}

\vspace{-0.55cm}

\begin{minipage}{0.46\textwidth}
\begin{flushleft}
\textit{\small $ ^1$ S.N. Bose National Centre \\ for Basic Sciences} \\
\textit{\small JD Block, Sector-3, Salt Lake, \\ Calcutta-700098, INDIA.} \\
\texttt{\small sumanto12@bose.res.in \\  partha@bose.res.in}
\end{flushleft}
\end{minipage}
\begin{minipage}{0.4\textwidth}
\begin{flushright}
\textit{\small $ ^2$  D.A.M.T.P., \\ University of  Cambridge} \\
\textit{\small Wilberforce Road, \\ Cambridge CB3 0WA,  U.K.} \\
\texttt{\small G.W.Gibbons@damtp.cam.ac.uk}
\end{flushright}
\end{minipage}

\smallskip

\abstract{This article studies the application of the Jacobi-Eisenhart lift, Jacobi metric and Maupertius transformation to 
the Kepler system. 
We start by reviewing fundamentals and the Jacobi metric. Then we study various ways to apply the lift to  Kepler related systems:
first as conformal description and Bohlin transformation of 
Hooke's oscillator, second in contact geometry,  third in Houri's transformation \cite{th}, coupled with  Milnor's 
construction \cite{Milnor}} with eccentric anomaly.  \\

{\bf MSC classes:} 70H06, 53D25, 53B20, 70F16. \\

{\bf Keywords :} Jacobi metric, Maupertuis principle, Kepler equation, geodesic flow, canonical transformation.

\tableofcontents

\setcounter{page}{1}

\numberwithin{equation}{section}

\section{Introduction}

The Kepler system, derived by Johannes Kepler in 1609, as interpreted by Newton is a 3-dimensional integrable 
system for an inverse square law force describing elliptic trajectories \cite{Deriglazov, LM}. It is related to 
the oscillator system via a canonical transformation known as the Bohlin transformation, 
resulting in many properties of the two systems being inter-related. It has many integrals of motion such 
as the angular momentum, the Hamiltonian and the Runge-Lenz vector. The last two 
translate into the equivalent conserved quantities known as the Fradkin tensors for the 
oscillator system under Bohlin's transformation. Recently Kepler problem has been studied on noncommutative 
$\kappa$ -spacetime and corresponding Bohlin-Arnold duality \cite{GHZ}. In particular, regularization of the Kepler problem on
$\kappa$ -spacetime in several different ways \cite{GHZ1}.  Regularization is a mathematical procedure to cure this singularity. 
A nice clear treatment of regularizing the
Kepler problem was done by Moser in his 1970 paper \cite{Moser}, the treatment of Moser relates the Kepler
flow for a fixed negative energy level to the geodesic flow on the sphere $S^n$. A lucid analysis of
the geometrical aspects of Kepler problem can be found in Milnor \cite{Milnor}. Belbruno extended 
the cases of posetive energy to negative energy, in correspondence to the 3-hyperboloid 
$\mathcal{H}^3$, and zero energy which corresponds to 3-dimensional Euclidean space \cite{belbruno}.

\smallskip

The Jacobi-Maupertius (JM) metric is a projection of the  an action functional onto a fixed energy surface, 
reducing the problem to a spatial geodesic \cite{br}. In other words, the Jacobi-Maupertuis 
metric reformulates Newton's equations as geodesic equations for a Riemannian 
metric which degenerate at the Hill  boundary \cite{Mont}. An important application to gravity 
was shown \cite{Ong} by Ong who studied the curvature of the the Jacobi metric for the Newtonian 
$n$-body problem. For $n = 2$, the problem reduces to the Kepler's problem of the relative 
motion and the relevant Jacobi metric is up to an unimportant overall constant factor.
Recently, one of us  \cite{Gibbons} showed that free motion of massive particles in static spacetimes 
is given by geodesics of an energy-dependent Riemannian metric on the spatial sections 
analogous to Jacobi's metric in classical dynamics. Recently this result has been extended \cite{CGG} to explore the Jacobi metrics for 
various stationary metrics. In particular, the Jacobi-Maupertuis metric is formulated for time-dependent metrics by 
including the Eisenhart-Duval lift, known as the Jacobi-Eisenhart metric. 

\smallskip

This results in geodesic trajectory reparameterization, redefining the Hamiltonian 
and effectively making it a canonical transformation of the extended phase space to a 
conformal theory. All other conserved quantities of a system are preserved under this 
lift. The Bohlin transformation is possibly itself a Jacobi-Maupertius lift of the oscillator 
metric.  The JM metric plays an important role in statistical mechanics \cite{Biesiada,Pettini}. Krylov \cite{Krylov} 
suggests that viewing $n$-body dynamics as a geodesic flow on an appropriate manifold 
may provide a universal tool for discussing relaxation processes. \\

An $n$-dimensional system is Liouville integrable if it admits   $n$ first-
integrals in involution which the Lagrangian submanifold depends upon. This means that 
integrability is a geometric property that is independent of choice of parameterisation, as 
seen when the same conserved quantities remain unchanged under reparameterisation. 
The only consequence of a different choice of parameterisation would be to produce a new 
integrable models. \cite{CrM} have claimed that the harmonic oscillator, when it is reformulated in 
terms of JM geodesics, has positive Lyapunov exponents. \\

In this article, we shall first introduce some preliminaries and the basic formulation 
of the Jacobi metric \cite{th, avt} and show that conserved quantities are preserved and the 
equation of motion reduces to the geodesic equation \cite{br} under reparameterisation by 
examining the metric from the Lagrangian and Hamiltonian perspectives. Following 
that, the Kepler system will be shown to be geodesic flow on constant curvature surfaces. 
Here, we shall demonstrate how such a projection to a fixed energy surface following a 
canonical transformation is the Bohlin's transformation \cite{mls} that converts the oscillator 
system into the Kepler system. This will  be followed by a discussion on application in 
Houri's canonical transformation \cite{th}. First we shall couple it with Milnor's construction 
to study the preservation of the form of geodesic flows under such canonical transformations.

\smallskip

Organization of the paper is as follows. Section 2 outlines the basic aspects of the Jacobi-Maupertuis metric, 
Maupertuis transformation and integrable metrics. We also describe Jacobi-Eisenhart lift and Lagrange-Hamiltonian formulation 
in this section. We apply all these toolkits to Kepler equation in section 3.

\section{Preliminaries}

The Maupertuis form of the action defines the differential form of the action along the 
geodesic. From this form we are able to formulate the Hamilton's equations of motion, 
as well as the canonical transformations that are possible.
\begin{equation}
\label{action} S = \int_1^2 d \tau \ \mathcal{L} (x^\mu, \dot{x}^\mu) = \int_1^2 dt \ \big( p_i \dot{x}^i - H (x, p) \big)
\,.\end{equation}
Here, we deal with the extended phase-space which treats time as another co-ordinate 
$q^{n+1} = t$ and the Hamiltonian as its conjugate momentum $p_{n+1} = - H(x, p)$. According 
to Maupertuis, the dynamical path solution from the extremal of the action $S$ coincides 
with that of the reduced action $S_0$ for a fixed energy $H (x, p) = E$, given by:
\begin{equation}
\label{spaceact} S_0 = \int_1^2 dt \ p_i \dot{x}^i = \int_1^2 dt \ \frac{\partial \mathcal{L}}{\partial \dot{x}^i} \dot{x}^i
\,.\end{equation}
This reduced action is independent of any time evolution parameter, resulting in loss 
of information since we cannot restore the Hamiltonian function. The Jacobi-Eisenhart 
lift is one such process for dimensionally reducing geodesics. Such trajectories can be 
seen as geodesics of a corresponding configuration space or its enlargement under some 
constraints. Upon parametrizing as $\tau = t$, the time quadratic action term provides 
the potential. Since such Hamiltonians arise from Lagrangians with a metric origin, the 
Jacobi-Eisenhart lift reduces the dimensions of the geodesic. \\ \\
Given a metric on $n+1$ dimensional space-time $ds^2 = g^{n+1}_{\mu \nu} (x) \ dx^\mu dx^\nu$, it is simple 
to formulate the Lagrangian describing dynamics on the $n$ dimensional sub-space with a 
potential $U (x)$. Let $M$ be a manifold with local co-ordinates $x = \big( x^i \big), i = 1, . . . n$, with 
$x(t) \in M \subseteq \mathbb{R}^n, t \in [0, T]$ being a curve. Define the velocity as $\dot{x}(t) \in T_x M \subseteq \mathbb{R}^n$ and 
the momenta as $p(t) \in T_x^* M \subseteq \mathbb{R}^n$, $T_x M$, $T_x^* M$ being tangent and co-tangent spaces 
respectively at $x = x(t)$. 

If $\mathcal{L} = T - U; T_x M \longrightarrow \mathbb{R}$ is a natural Lagrangian, then $T$ is a non-degenerate 
quadratic form and $V$ is constant on each $T_x M$. Such dynamical systems under affine 
parametrization $\tau = x^0 = t$ are defined by the Lagrangian:
\begin{equation}
\label{lag} \mathcal{L} (x, \dot{x}) = \frac m2 g_{\mu \nu} (x) \dot{x}^\mu \dot{x}^\nu \equiv \frac m2 g_{ij} (x) \dot{x}^i \dot{x}^j - U (x) 
\,,\end{equation}
where $g_{\mu \nu} (x)$ is a Riemannian metric. The Euler-Lagrange equation is given by:
\begin{equation}
\label{accel} \ddot{x}^i = - \sum_{jk} \Gamma^i_{jk} \dot{x}^j \dot{x}^k - \sum_l g^{il} (x) \partial_l U(x)
\,.\end{equation}
The time-independent Hamiltonian $H: T^* M \longmapsto \mathbb{R}$ is a conserved quantity given by 
a Legendre transformation that maps the dynamics from the tangent to the co-tangent 
space $\mathbb{F} L: TM \longrightarrow T^*M; (x, \dot{x}) \longmapsto (x, p) = \bigg( x, \dfrac{\partial L}{\partial \dot{x}} \bigg)$.
\begin{equation}
  H (x, p) = \sum_{i=1}^n p_i \dot{x}^i - \mathcal{L} (x,
  \dot{x})\,, \qquad \qquad p_i = \frac{\partial \mathcal{L}}{\partial x^i} = g_{ij} (x) \dot{x}^j
\,.\end{equation}
If the Lagrangian has  a natural form given by (\ref{lag}),
then will  the Hamiltonian. 
The natural Hamiltonian for an autonomous system is a
conserved quantity. As seen in 
(\ref{action}) it acts as the generator for time-evolution of the geodesic action given by:
\begin{equation}
\label{potham} H (x, p) = \frac1{2m} g^{ij} (x) p_i p_j + U (x) \equiv T + U = E \,,
\end{equation}
where the Hamilton's dynamical equations are elaborated as:
\begin{equation}
\label{hamdyn} \dot{x}^i = \frac{\partial H}{\partial p_i} = \frac{g^{ij} (x)}m p_j \qquad \qquad \dot{p}_i = \frac{\partial H}{\partial x^i} = \frac1{2m} \frac{\partial g^{ij} (x)}{\partial x^i} p_i p_j + \frac{\partial U}{\partial x^i} \,.
\end{equation}
We have so far dealt with the action on a space-time manifold for a
general system 
with no fixed value for the Hamiltonian as a function on the
cotangent bundle. For an 
autonomous (time-independent) system,
we will have a fixed energy level defining the 
hypersurface on which motion takes place.
We will study the reduced action on this 
hypersurface, by employing the Jacobi metric arising from the
Jacobi-Maupertius lift.

\numberwithin{equation}{subsection}

\subsection{Jacobi-Maupertuis metric and Maupertuis principle}

It is possible to derive a metric which is given by the kinetic energy itself. Let us consider 
a conservative system with $n$ degrees of freedom whose Lagrangian is given by (\ref{lag}). The 
kinetic energy $T$ is a homogeneous function of degree $2$, hence Euler theorem implies 
$ 2T = \dot{x}^i \dfrac{\partial {\cal L}}{\partial \dot{x}^i}$, thus Maupertuis principle becomes
$$
\delta S = \delta\int_{t_1}^{t_2}2T\,dt = 0, \qquad \hbox{ where } \qquad T = \frac m2 g_{ij}(x) \dot{x}^i \dot{x}^j.
$$ 
Since the total energy of the conservative system is constant $E = T + U$, then substituting 
$U$ in the Lagrangian we find
${\cal L} = 2T-E$. Substituting in (\ref{action}) we obtain
\begin{equation}\label{mau}
\delta \int_{t_1}^{t_2}(2T -  E)\,dt = \delta \int_{t_1}^{t_2}2T\,dt - \delta \int_{t_1}^{t_2}E\,dt =  \delta \int_{t_1}^{t_2}2T\,dt.
\end{equation}
If we take the kinetic energy to be diagonal and all masses are equal, i.e., $a_{ij} = \delta_{ij}$, then 
equation (\ref{mau}) can be re-written as 
$$\delta \int_{t_1}^{t_2} 2T \ dt = \delta \int_{t_1}^{t_2} ds = \delta \int_{t_1}^{t_2} \big( \widetilde{g}_{ij}(x) \dot{x}^i\dot{x}^j \big)^{1/2} dt.$$
so that natural motion is geodesic of a configuration space $M$ and $ds$ is the differential 
arc length. The metric on
$M$ is referred to as the Jacobi metric, given by
\begin{equation}\label{e1}
ds^2 = \widetilde{g}_{ij}(x)dx^idx^j = 4\big(E - U(x)\big)T dt^2.
\end{equation}
Alternatively this expression can be obtained straight away by squaring $ds = 2T\, dt = 2(E - U(x))dt$. 
If we substitute $2 T = m g_{ij}(x)\dot{x}^i \dot{x}^j$ into (\ref{e1}) we find
$$
ds^2 = 2m \big(E - U(x)\big)g_{ij}(x)\dot{x}^i \dot{x}^j dt^2 = 2m \big(E - U(x)\big)g_{ij}(x)dx^idx^j.
$$
Thus we obtain
\begin{equation}\label{e2}
ds^2 = \widetilde{g}_{ij}(x) dx^idx^j, \qquad \hbox{ where } \qquad    \widetilde{g}_{ij}(x) = 2m (E - U(x)) g_{ij}(x).
\end{equation}
Thus the above arguments imply that a physical path of energy value $E$
is a geodesic with respect to the metric (\ref{e2}). So we can define the JM-metric
$\widetilde{g}_{jm}$ corresponding to an energy value $E$ of simple mechanical system $(M,g,U)$ as
$$
\label{jacg} \widetilde{g}_{jm} := 2m (E - U) g(x).
$$
This is the metric that defines the geodesic on the hypersurface of energy $E$ within spacetime. 
As we can see, the potential $U(x)$ has been merged into the metric components, 
giving the appearance of a potential-free space for a free article.

\subsection{Jacobi-Maupertuis transform and integrable metrics}

Let $M$ be a compact smooth $n$-dimensional Riemannian manifold with metric $g_{ij}(x)$. 
The cotangent bundle $T^{\ast}M$ is a smooth symplectic manifold with standard 2-form 
$\omega = \sum_{i=1}^{n}dp_i \wedge dx^i$. One can find other integrable metrics using the Jacobi-Maupertuis 
transformation. Let us consider the natural mechanical systems with Hamiltonian given 
by $H = \frac1{2m} \sum_{i,j}g^{ij}(x)p_ip_j + U(x)$. It is said that a Hamiltonian system on a $2n$-dimensional symplectic manifold is Louiville integrable if the $n$ first integrals are in involution and functionally independent everywhere. Integrable geodesic flows play a very important role in geometry, mechanics and integrable systems \cite{bt, BKF}. \\ \\
By the Maupertuis principle, for sufficiently large energy $E$, greater than $\hbox{ max } U({\bm x})$, 
on a fixed $(2n-1)$-dimensional smooth level surface $H(x, p) = E$, the integral trajectories 
of the vector field $X_H$ coincide with the trajectories of the another vector field ${\widetilde X_{\widetilde H}}$
corresponding to a new Hamiltonian ${\widetilde H}$ given by the formula
\begin{equation}\label{H1}
{\widetilde H}(x,p) = \frac1{2m} \sum_{i,j = 1}^{n}\frac{g^{ij}(x)}{E-U(x)} p_i p_j,
\end{equation}
The Maupertuis transformation $X_H \to {\widetilde X_{\widetilde H}}$ relates two vector field on $M$. If $t$ and $\sigma$ 
are time along trajectories of the vector fields $X_H$ and ${\widetilde X_{\widetilde H}}$, then 
\begin{equation}
\label{t1} d\sigma = (E - U(x)) dt
\end{equation}
The distinguished role of the time $t$ is not desirable in the
general case of non-autonomous Hamiltonian systems. We therefore
introduce an evolution parameter $s$ to parameterize
 time evolution of the system. In the extended formalism, time $t$ is
treated as an ordinary canonical function $t(s) \equiv x^{0}(s)$
of an evolution parameter $s$. We may conceive a `new' momentum
coordinate $p_0 (s)$ in conjunction with the time as an additional
pair of canonically conjugate coordinates. The extended
Hamiltonian ${\cal H}(x^0,p_0,x^i,p_i)$ is defined as a
differentiable function on the cotangent bundle $T^{\ast}Q =
T^{\ast}({\Bbb R} \times M)$ endowed with a chart $(p_0, p_i) \in
T_{x_0,x_i}^{\ast}Q$ with $\frac{\partial {\cal H}}{\partial s} =
0$. It is given by ${\cal H}(x^0,p_0,x^i,p_i) = H(x^i,p_i,x^0) +
p_0$, where $x^0$ and $p_0$ are conjugate variables and $p_0 = -H$. The extended phase space admits a
Liouville form (or integral invariant of Poincar\'e-Cartan) 
\begin{equation}
\label{eq} {\cal \theta}_{\cal H} = p_0dt + p_idx^i 
\end{equation}
 and the Hamiltonian flow is completely determined by
the conditions:
$$
\langle{\Bbb X}_{\cal H}, dt \rangle = 1\;\;\; \mbox{and} \qquad {\Bbb
X}_{\cal H} \lrcorner d{\cal \theta}_{\cal H} = 0,
$$
where 
\begin{equation}
{\Bbb X}_{\cal H} = \dot{x}^\mu \frac{\partial \ }{\partial x^\mu} + \dot{p}_\mu \frac{\partial \ }{\partial p_\mu}
\end{equation}
Invoking Hamilton's equations of motion, and keeping in mind that $\dot{t} = 1, p_0 = - q(t)$ 
and the Maupertuis form of action, we have the extended Hamiltonian given below
$$\text{Maupertuis form of action: } \ \mathcal{L} (x^\mu, \dot{x}^\mu) = \sum_{\mu = 0}^n p_\mu \dot{x}^\mu = \sum_{i = 1}^n p_i \dot{x}^i + p_0 \dot{t}$$
$$\mathcal{H} (x^i, p_i, t) = \sum_{\mu = 0}^n p_\mu \dot{x}^\mu - \mathcal{L} (x^\mu, \dot{x}^\mu) = \Big[ \sum_{i = 1}^n p_i \dot{x}^i - \mathcal{L} (x^i, \dot{x}^i, t) \Big] + p_0 \dot{t} = 0$$
$$\mathcal{H} (x^i, p_i, t) = H (x^i, p_i, t) - q(t) = 0$$
Thus, the extended Hamiltonian vector field is given by
$${\Bbb X}_{\cal H} = \sum_\mu \bigg( \frac{\partial \mathcal{H}}{\partial p_\mu} \frac{\partial \ }{\partial x^\mu} - \frac{\partial \mathcal{H}}{\partial x^\mu} \frac{\partial \ }{\partial p_\mu} \bigg) \quad = \quad \sum_i \bigg( \frac{\partial \mathcal{H}}{\partial p_i} \frac{\partial \ }{\partial x^i} - \frac{\partial \mathcal{H}}{\partial x^i} \frac{\partial \ }{\partial p_i} \bigg) + \frac{\partial \mathcal{H}}{\partial H} \frac{\partial \ }{\partial t} - \frac{\partial \mathcal{H}}{\partial t} \frac{\partial \ }{\partial H}$$
Here, we apply some rules:
$$\frac{\partial \mathcal{H}}{\partial x^i} = \frac{\partial H}{\partial x^i}, \qquad \frac{\partial \mathcal{H}}{\partial p_i} = \frac{\partial H}{\partial p_i}, \qquad \qquad \frac{\partial \mathcal{H}}{\partial H} = 1, \qquad \frac{\partial \mathcal{H}}{\partial t} = 0$$
 \begin{equation} 
{\Bbb X}_{\cal H} = \frac{\partial {H}}{\partial p_i}\frac{\partial}{\partial x^i} - \frac{\partial {H}}{\partial x^i}\frac{\partial}{\partial p_i} + \frac{\partial}{\partial t}
\end{equation}
is the time-dependent Hamiltonian vector field. The vector field ${\Bbb X}_{\cal H}$ lies in the kernel of $d{\cal \theta}_{\cal H}$,
so the bicharacteristic of ${\cal \theta}_{\cal H}$ is a path through the extended phase space such that the tangent vector
to the path at any point is parallel to ${\Bbb X}_{\cal H}$. \\ \\
It is clear that the Poincar\'e-Cartan two form associated to (\ref{eq})
\begin{equation}
{\cal \omega} = \sum_i dp_i \wedge dx^i - dH \wedge dt
\end{equation}
is invariant under Jacobi-Maupertuis transformation. This reveals that the JM transformation
is the {\it time-dependent canonical transformation}. \\ \\
Consider the time-dependent canonical transformations of the extended phase space,
\begin{equation}
\label{repara} 
\begin{split}
t \to \sigma &\qquad d\sigma = \Lambda (x, p) dt \\
H \to {\widetilde H} &\qquad {\widetilde H} = \Lambda^{-1} (x, p) H
\end{split}
\end{equation}
where $\Lambda (x, p) = \big(E - U (x) \big)$. This changes the initial equations of motion
$$
\frac{dx^i}{d\sigma} = \Lambda^{-1}(x, p)\bigg(\frac{dx^i}{dt} - {\widetilde H}\frac{\partial \Lambda}{\partial p_i} \bigg),
\qquad \frac{dp_i}{d\sigma} = \Lambda^{-1}(x, p)\bigg(\frac{dp_i}{dt} + {\widetilde H}\frac{\partial \Lambda}{\partial x^i}\bigg).
$$
This preserves the canonical form of the Hamilton-Jacobi equation given by
%\begin{equation}
%\frac{\partial S}{\partial t} = - H (x, p) \hspace{2cm} S = \int_1^2 \bigg( p_i d x^i - H (x, p) dt \bigg),
%\end{equation}
$$
\frac{\partial S}{\partial \sigma} + {\widetilde H} = \frac{\partial S}{\partial t}\frac{dt}{d\sigma} + \Lambda^{-1}H 
= \Lambda^{-1} \bigg( \frac{\partial S}{\partial t} + H \bigg) = 0.
$$
In other words, $S$ satisfies
$$
S = \int (p_i dx^i - Hdt) = \int (p_i dx^i - {\widetilde H}d\sigma).
$$
Integral trajectories have two parametric forms $X_H$ and $X_{\widetilde H}$ corresponding to the Hamiltonians $H$ and 
${\widetilde H} = \Lambda^{-1} (x, p) H$ respectively. The transformation $X_H \to X_{\widetilde H}$ is the
Maupertuis transformation. If $\sigma$ be the time along trajectories of the vector $X_{\widetilde H}$, then the
Maupertuis transformation gives the Jacobi transformation $d\sigma = (E - U (x))dt$. \\ \\
Thus, the reparameterization can be seen as part of the canonical transformation \cite{iat1, iat2} to counter the changes in the form of the equation of motion. This maps the geodesic onto another geodesic while preserving integrability.

\subsection{Jacobi-Eisenhart lift and Jacobi-Maupertuis metric}

The Jacobi-Eisenhart lift eliminates the potential in a $n$ dimensional Hamiltonian system (\ref{potham}), reducing it into a spatial $n$-dimensional free particle geodesic. The result is the Jacobi metric which, for time-independent Hamiltonian systems, projects the original geodesic onto a constant energy surface as a spatial geodesic describing a free particle with no potentials. \\ \\
So, for a fixed energy $H (x, p) = E$, the Jacobi Hamiltonian $\widetilde{H}$ \cite{th, avt} is:
\begin{equation}
\label{jacham} g^{ij} (x) p_i p_j = 2m \big[ E - U (x) \big] \qquad \Rightarrow \qquad \widetilde{H} = \frac1{2m} \frac{g^{ij} (x)  p_i p_j}{E - U (x)} \equiv \frac T{E - U (x)} = 1 
\end{equation}
This means that the metric and its inverse transform into their Jacobi-Maupertius equivalent as shown below:
\begin{equation} \label{jacmet}
\begin{split}
\widetilde{g}^{ij} (x) p_i p_j &= 1 \\ 
\widetilde{g}^{ij} (x) = \frac{g^{ij} (x)}{2m \big[E - U (x) \big]} \qquad &\Rightarrow \qquad \widetilde{g}_{ij} (x) = 2m \big[ E - U (x) \big] g_{ij} (x)
\end{split}
\end{equation}
Naturally, for a transformed Hamiltonian, the dynamical description should also 
change to match the new generator of time translations. This essentially means that the 
geodesic must be reparameterized to keep the form of Hamilton's equations invariant. 
Furthermore, from the lifted Hamiltonian, using (\ref{hamdyn}) and (\ref{jacham}) gives the momentum 
and reparameterization factor:
$$\frac{dx^i}{d\sigma} = \frac{\partial \widetilde{H}}{\partial p_i} = 2 \widetilde{g}^{ij} (x) p_j = \frac1{m \big[E - U (x) \big]} \Big( g^{ij} (x) p_j \Big) = \frac{dt}{d\sigma} \dot{x}^i$$
\begin{equation} 
\label{reparfac} p_i = \frac12 \widetilde{g}_{ij} (x) \frac{dx^j}{d\sigma} \qquad \qquad \Lambda (x, p) = \frac{d\sigma}{dt} = \big| E - U (x) \big|
\end{equation}
Thus, according to (\ref{repara}), the new Hamiltonian can be said to be:
\begin{equation}
\label{newham} \mathcal{H} = \frac{H}{\big| E - U (x) \big|}
\end{equation}
Using (\ref{reparfac}) for the Jacobi Hamltonian, we can say that the reduced Lagrangian is
\[ \begin{split}
\widetilde{\mathcal{L}} = \widetilde{g}_{ij} (x) \frac{dx^i}{d\sigma} \frac{d x^j}{d\sigma} = 4 \widetilde{g}^{ij} (x) \bigg(  \frac12 \widetilde{g}_{ik} (x) \frac{d x^k}{d\sigma} \bigg) \bigg( \frac12 \widetilde{g}_{jl} (x) \frac{dx^l}{d\sigma} \bigg) = 4 \widetilde{g}^{ij} (x) p_i p_j = 4 \widetilde{H} = 4
\end{split} \]

\begin{equation}
\therefore \qquad \widetilde{\mathcal{L}} = \big[ E - U (x) \big] \mathcal{L} \quad = \quad \widetilde{g}_{ij} (x) \frac{dx^i}{d \sigma} \frac{d x^j}{d \sigma} = 4
\end{equation}

Liouville integrability of an $n$-dimensional geodesic flow is defined to imply that: 

\begin{enumerate}

\item[a.] $n$ functionally independent first-integrals of motion $I_n$ exist almost everywhere.

\item[b.] Such integrals are in involution: $\{ I_j, I_k \} = 0$ for all $1 < j, k < n$.

\end{enumerate}
Restricting the geodesic flow onto any non-zero fixed energy level surfaces are smoothly 
equivalent to the trajectory. Consequently, we may redefine the condition of integrability 
to imply the existence of $n-1$ functionally independent first integrals in involution 
almost everywhere on the unit covector bundle $\{ \widetilde{H} (x, p) = \widetilde{g}^{ij} (x) p_i p_j = 1 \} \subset T^* M^n$ \cite{bt}.

\subsection{Lagrangian and Hamiltonian perspective}

Mechanics has been historically studied from two approaches: Lagrange's and Hamilton's. This results in two different, yet equivalent formulations of the equations of motion to describe geodesics. Since we have shown how to formulate the lifted Hamiltonian and Lagrangian, it is natural to explore how the equations of motion take shape under such formulations, and the effect on conserved quantities. \\ \\
Starting with the Hamiltonian in (\ref{jacham}), we shall write the dynamical equations with respect to a new parameter $\sigma$ as shown in \cite{br}
\begin{equation} \label{jacdyn}
\begin{split}
\frac{d x^i}{d \sigma} &= \frac{\partial \widetilde{H}}{\partial p_i} = \frac{g^{ij} (x)}{m \big[ E - U (x) \big]} p_j \\
\frac{d p_i}{d \sigma} &= - \frac{\partial \widetilde{H}}{\partial x^i} = - \frac1{E - U (x)} \bigg[ \frac1{2m} \frac{\partial g^{mn} (x)}{\partial x^i} p_m p_n + \frac{\partial U}{\partial x^i} \bigg]
\end{split}
\end{equation}

\begin{theorem}
Let $T: TQ \rightarrow \mathbb{R}$ be a smooth pseudo-Riemannian metric, $U : Q \rightarrow \mathbb{R}$ 
be a smooth potential energy function, and $t \mapsto q(t), I \rightarrow Q$ be a curve in $Q$ such that 
$E \big(\bm{q}(t), \bm{\dot{q}} (t) \big) = E \in \mathbb{R}$ and $U (\bm{q}(t)) \neq E \ \forall \ t \mapsto \widetilde{t}(t), I \rightarrow \mathbb{R}$ defined by
$$\sigma(t) = \int_0^t d \tau \big( E - U (q(\tau)) \big)$$
is a diffeomorphism into its image $J: s \mapsto t(s), J \rightarrow I$. Moreover, $t \rightarrow q(t)$ in $Q$ is a 
solution to the Euler-Lagrange equation $EL(L) = 0$ iff the curve $s \mapsto x(t(s)), J \rightarrow Q$ is 
a geodesic of the Jacobi metric $\widetilde{L} = (E - U) L$.
\end{theorem}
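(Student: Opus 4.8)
The plan is to reduce the equivalence to a single computation comparing the Euler--Lagrange equation of $L$, restricted to the energy level $E$, with the affinely parametrised geodesic equation of the conformally rescaled metric $\widetilde g = (E-U)\,g$. The auxiliary claim that $\sigma$ is a diffeomorphism is immediate and is the only non-algebraic input: on the interval $I$ the continuous function $E - U(q(t))$ never vanishes, hence has constant sign, so $\sigma'(t) = E - U(q(t))$ is nowhere zero, $\sigma$ is strictly monotone and smooth, and its inverse $t = t(s)$ is smooth. The hypothesis $U(q(t)) \neq E$ is exactly what makes this work and, at the same time, what keeps the rescaling $\widetilde g = (E-U)g$ nondegenerate; the argument below is purely algebraic, so the pseudo-Riemannian (possibly indefinite) character of $g$ causes no difficulty.

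For the equivalence itself, write $L$ in the natural form (\ref{lag}): then $EL(L)=0$ is the forced geodesic equation (\ref{accel}), $\ddot q^i + \Gamma^i_{jk}\dot q^j\dot q^k + g^{il}\partial_l U = 0$, and $E(\bm q,\bm{\dot q}) = E$ is the pointwise relation $\tfrac m2 g_{jk}\dot q^j\dot q^k = E - U(q)$. Put $e^{2\phi} := E - U$, so $\partial_l\phi = -\tfrac12\partial_l U/(E-U)$, and use the standard conformal identity $\widetilde\Gamma^i_{jk} = \Gamma^i_{jk} + \delta^i_j\partial_k\phi + \delta^i_k\partial_j\phi - g_{jk}\,g^{il}\partial_l\phi$ to expand the geodesic equation $\tfrac{d^2 q^i}{d\sigma^2} + \widetilde\Gamma^i_{jk}\tfrac{dq^j}{d\sigma}\tfrac{dq^k}{d\sigma} = 0$ of $\widetilde g$. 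Converting $\sigma$-derivatives to $t$-derivatives via $d\sigma = (E-U)\,dt$ (cf. (\ref{t1})) generates, besides $\ddot q^i/(E-U)^2$ and $\Gamma^i_{jk}\dot q^j\dot q^k/(E-U)^2$, exactly two terms proportional to $\dot q^i$ --- one from differentiating the factor $1/(E-U)$ in $\tfrac{d^2 q^i}{d\sigma^2}$, one from the $\delta$-terms of $\widetilde\Gamma$ --- which cancel identically, while the term $g_{jk}\tfrac{dq^j}{d\sigma}\tfrac{dq^k}{d\sigma}\,g^{il}\partial_l\phi$ collapses to a multiple of $g^{il}\partial_l U$ once the energy relation is inserted for $g_{jk}\dot q^j\dot q^k$. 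Clearing the overall factor $(E-U)^{-2}$ leaves precisely $EL(L)=0$. Because every step is reversible and because the standing hypothesis $E(\bm q,\bm{\dot q})=E$ already forces $\widetilde g_{ij}\tfrac{dq^i}{d\sigma}\tfrac{dq^j}{d\sigma}$ to be constant along the curve --- so that ``geodesic in the parameter $\sigma$'' is the correct affinely parametrised notion (compare the computation $\widetilde{\mathcal L} = $ const preceding (\ref{newham})) --- this one computation establishes both implications.

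The main obstacle is not any individual identity but the accounting: the equivalence ``$EL(L)=0 \Leftrightarrow$ Jacobi geodesic'' lives on the slice $\{E(\bm q,\bm{\dot q}) = E\}$, the energy relation must actually be \emph{used} (not just assumed) in the cancellation, and one must track the conformal weight so that no stray velocity term survives. A cleaner but less self-contained alternative is to invoke the Maupertuis machinery already developed: it is a time-dependent canonical transformation sending $H$ to $\widetilde H = H/(E-U)$ with $d\sigma = (E-U)\,dt$ and preserving the Poincar\'e-Cartan form and the Hamilton-Jacobi equation, so Hamilton's equations for $H$ on $\{H = E\}$ correspond to those for $\widetilde H$ on $\{\widetilde H = 1\}$; Legendre-transforming each side returns $EL(L)=0$ on the one hand and $EL(\widetilde L)=0$ for the Jacobi Lagrangian on the other. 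I would present the direct computation as the proof and mention this second route as a remark.
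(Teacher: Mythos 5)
Your proposal is correct and follows essentially the same route as the paper's proof: reparametrize by $d\sigma=(E-U)\,dt$ (the diffeomorphism coming from the nonvanishing of $E-U$), expand via the conformal relation between $\Gamma^i_{jk}$ and $\widetilde\Gamma^i_{jk}$, and use the energy constraint to cancel the velocity-proportional terms and convert the potential-gradient term, so that $EL(L)=0$ and the affinely parametrised $\widetilde g$-geodesic equation become the same identity. The only cosmetic differences are that you run the computation from the Jacobi geodesic side back to $EL(L)=0$ (with a reversibility remark) and phrase the conformal factor as $e^{2\phi}=E-U$, whereas the paper goes in the forward direction with $\widetilde g = 2m(E-U)g$; your Hamiltonian/Maupertuis remark mirrors the paper's earlier section but is not needed.
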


\begin{proof}
So long as we have
$$ \frac{d \sigma(t)}{dt \ } = E - U (x(t)) \neq 0$$
the inverse function theorem guarantees that $t \mapsto s(t)$ is a diffeomorphism onto its 
image $Q$, reparameterizing the the curve as $s \mapsto x(s) = x(t(s))$. Thus, the velocity upon 
differentiation wrt $t$ is:
\begin{equation}
\frac{d x^i}{d t} = \frac{d x^i}{d \sigma} \frac{d \sigma}{d t} = \big( E - U (x) \big) \frac{d x^i}{d \sigma}
\end{equation}
and the acceleration from (\ref{accel}) can be re-written as:
\begin{equation}
\ddot{x}^i = \frac{d \sigma}{dt} \frac{d \ }{d \sigma} \bigg[ \frac{d \sigma}{dt} \frac{d x^i}{d \sigma} \bigg] = \big( E - U (x) \big)^2 \frac{d^2 x^i}{d \sigma^2} - \big( E - U (x) \big) \partial_j U (x) \frac{d x^i}{d \sigma} \frac{d x^j}{d \sigma}
\end{equation}
and the Euler-Lagrange equation (\ref{accel}) transforms as:
$$\big( E - U (x) \big) \frac{d^2 x^i}{d \sigma^2} - \partial_j U (x) \frac{d x^i}{d \sigma} \frac{d x^j}{d \sigma} = - \sum_{jk} \big( E - U (x) \big) \Gamma^i_{jk} \frac{d x^j}{d \sigma} \frac{d x^k}{d \sigma} - \sum_l \widetilde{g}^{il} \partial_l U(x)$$
\[ \begin{split}
\Gamma^i_{jk} &= \frac12 g^{im} \big( \partial_j g_{mk} + \partial_k g_{mj} - \partial_m g_{jk} \big) \\
&= \bigg[ \frac 1{2 \big( E - U (x) \big)} \bigg( \partial_j U (x) \delta^i_k + \partial_k U (x) \delta^i_j - \widetilde{g}^{im} \partial_m U (x) \widetilde{g}_{jk} \bigg) + \widetilde{\Gamma}^i_{jk} \bigg]
\end{split} \]
\[ \begin{split}
\big( E &- U (x) \big) \frac{d^2 x^i}{d \sigma^2} - \partial_l U (x) \frac{d x^l}{d \sigma} \frac{d x^i}{d \sigma} = - \sum_{jkl} \bigg[ \big( E - U (x) \big) \Gamma^i_{jk} \frac{d x^j}{d \sigma} \frac{d x^k}{d \sigma} + \widetilde{g}^{il} \partial_l U(x) \bigg] \\
& =  - \sum_{jkl} \bigg[ \partial_l U (x) \frac{d x^l}{d \sigma} \frac{d x^i}{d \sigma} - \widetilde{g}^{im} \partial_m U (x) \bigg( \underbrace{\frac12 \widetilde{g}_{jk} \frac{d x^j}{d \sigma} \frac{d x^k}{d \sigma}}_{\frac12 \widetilde{\mathcal{L}}=2} \bigg) + \widetilde{g}^{il} \partial_l U(x) \bigg] \\
& \qquad \qquad - \sum_{jk} \big( E - U (x) \big) \widetilde{\Gamma}^i_{jk} \frac{d x^j}{d \sigma} \frac{d x^k}{d \sigma}
\end{split} \]
\begin{equation}
\label{lifteom} \frac{d^2 x^i}{d \sigma^2} = - \widetilde{\Gamma}^i_{jk} \frac{d x^j}{d \sigma} \frac{d x^k}{d \sigma}
\end{equation}
Thus, the Euler-Lagrange equation has been mapped to a regular geodesic equation for 
the Jacobi metric (\ref{jacg}). The Jacobi-Maupertius principle holds for any system with 
non-zero kinetic energy.
\end{proof}

Also, for any conserved quantity $K = K^{(2)ij} p_i p_j + K^{(0)}$, we can say:
\begin{align}
\frac{d K}{d \widetilde{t}} = \big\{ K, \widetilde{H} \big\} = \frac{d t}{d \widetilde{t}} \frac{d K}{d t} = \frac1{E - U (x)} \big\{ K, H \big\} \\ \nonumber \\
\therefore \qquad \big\{ K, \widetilde{H} \big\} = 0 \qquad \Rightarrow \qquad \big\{ K, H \big\} = 0
\end{align}
In \cite{th}, T. Houri describes $\widetilde{K} = K^{(2)ij} p_i p_j + K^{(0)} \widetilde{H}$ where according to (\ref{jacham}):
\begin{equation}
\widetilde{K} = K^{(2)ij} p_i p_j + K^{(0)} \widetilde{H} \hspace{0.25cm} = \hspace{0.25cm} K^{(2)ij} p_i p_j + K^{(0)} = K \hspace{2cm} \because \hspace{0.25cm} \widetilde{H} = 1
\end{equation}
Thus, showing that the conserved quantities remain the same for the Jacobi metric. This is not surprising given that the Jacobi -Eisenhart lift was just a reparameterization that left position and momenta unaltered. Since all conserved quantities or first integrals in Hamiltonian mechanics are polynomials of position and momenta, they should also be unchanged under such a transformation, unless a canonical transformation is involved. \\ \\
We shall now proceed to apply the the Jacobi metric to the Kepler problem.

\numberwithin{equation}{section}

\section{Application to Kepler problem}

We now consider the Kepler problem of orbital motion in the presence of a central 
potential $U(r) = - \frac{\alpha}r$. Since this is a problem involving spherical symmetry, we have the 
spatial part of the metric as the conformally flat polar metric. We shall only consider 
two dimensional motion because of angular momentum conservation in a radial potential. \\ \\
Thus, the Jacobi-Kepler metric is given as a conformally flat metric:
\begin{equation}
d \widetilde{s}^2 = \big( E - U(r) \big) \big( d r^2 + r^2 d \theta^2 \big) = f^2 (r) \big( d r^2 + r^2 d \theta^2 \big)
\end{equation}
Here, the Gaussian curvature is given by:
\[ \begin{split}
e^r = f(r) \ dr \qquad &\qquad e^\theta = r f(r) \ d \theta \\
d e^\theta = \big( r f(r) \big)' dr \wedge d \theta \qquad &\Rightarrow \qquad {\omega^\theta}_r = \frac{\big( r f(r) \big)'}{f(r)} d \theta \\ \\
d {\omega^\theta}_r = \bigg( \frac{\big( r f(r) \big)'}{f(r)} \bigg)' dr \wedge d \theta \qquad &\Rightarrow \qquad {R^\theta}_{r \theta r} = - \frac1{r f^2(r)} \bigg( \frac{\big( r f(r) \big)'}{f(r)} \bigg)'
\end{split} \]
\begin{equation}
\label{gacurv} \therefore \qquad K_G = {R^\theta}_{r \theta r} = - \frac1{r f^2 (r)} \frac{d \ }{dr} \bigg( \frac1{f(r)} \frac{d \ }{dr} \big( r f(r) \big) \bigg)
\end{equation}
Thus, for $f^2 (r) = E - U(r)$, the Gaussian curvature (\ref{gacurv}) in this case is given as:
\begin{equation} 
K_G = \frac{\big( r U' (r) \big)' \big( E - U(r) \big) + r \big( U'(r) \big)^2 }{2 r \big( E - U(r) \big)^3} 
\end{equation}
If $h$ is a regular value of $U(r)$ on the boundary ring, ie. $U(r) = h; x \in \partial M$ we have by continuity
\begin{equation}
\big( r U' (r) \big)' \big( E - U(r) \big) + r \big( U'(r) \big)^2 > 0, \qquad K_G \longrightarrow \infty
\end{equation}
In case of the Kepler problem, we have $U(r) = - \dfrac{\alpha}r$, so the Gaussian curve $K_G$ is:
\begin{equation}
\label{kcurv} K_G = - \frac{\alpha E}{2 \big( r E + \alpha \big)^3}.
\end{equation}
Thus, we can see that the curvature is classified as:
\begin{equation}
 \forall \quad E > - \frac{\alpha}r \qquad
\begin{cases}
E < 0 \quad \Rightarrow \quad K_G > 0 \quad ; \quad \text{ellipse} \\
E = 0 \quad \Rightarrow \quad K_G = 0 \quad ; \quad \text{parabola} \\
E > 0 \quad \Rightarrow \quad K_G < 0 \quad ; \quad \text{hyperbola} 
\end{cases}
\end{equation}
Thus, for the Kepler problem, for negative energies in the range $- \frac{\alpha}r < E < 0$, we will 
have positive curvature, and thus closed periodic orbits described by the Jacobi-Kepler 
metric. What motivates us to connect this theory with the Kepler problem is that it 
describes $\widetilde{H} = 1$ geodesic flow on $T^* S^3, K_G = 1$ energy surface. \\ \\
The Hamiltonian flow along a geodesic is given by the Hamiltonian vector field operator, 
which for the Kepler equation, essentially becomes:
\begin{equation}
\label{kepflow} X_{H} = \frac{\partial H}{\partial p_i} \frac{\partial \ }{\partial x^i} - \frac{\partial H}{\partial x^i} \frac{\partial \ }{\partial p_i} \hspace{0.5cm} = \hspace{0.5cm} p_i \frac{\partial \ }{\partial x^i} - \alpha \frac{x^i}{r^3} \frac{\partial \ }{\partial p_i}
\end{equation}
Thus, under circumstances of constant curvature, the radial equation of motion is:
\begin{equation}
\ddot{r} - \frac{p_\theta^2}{r^3} = - U'(r)
\end{equation}
Thus, for constant vanishing Gaussian curvature, we will have the Kepler potential, and 
thus, the Kepler equations of motion. However, if we consider the Jacobi metric and 
Hamiltonian, we will have:
\begin{align}
\frac{d r}{d \sigma} = \frac1{E - U (r)} &p_r \qquad \Rightarrow \qquad p_r = \frac{rE + \alpha}r \frac{d r}{d \sigma} \\
\frac{d p_r}{d \sigma} &= - \frac r{rE + \alpha} \bigg[ - \frac{p_\theta^2}{r^3} + \frac{\alpha}{r^2} \bigg] \\ \nonumber \\ \displaybreak[0]
- \frac{\alpha}{r^2} \bigg( \frac{dr}{d \sigma} \bigg)^2 &+ \frac{rE + \alpha}r \frac{d^2 r}{d \sigma^2} = - \frac r{rE + \alpha} \bigg[ - \frac{p_\theta^2}{r^3} + \frac{\alpha}{r^2} \bigg] \nonumber \\ \nonumber \\ 
\label{radeq} \therefore \qquad \frac{d^2 r}{d \sigma^2} &= - \frac{E p_\theta^2}{\big( rE + \alpha \big)^3} - \frac{\alpha}{\big( r E + \alpha \big)^2}
\end{align}
If one wishes to verify, it can be confirmed in (\ref{radeq}) that:
\begin{equation}
\widetilde{\Gamma}^r_{jk} \frac{d x^j}{d \sigma} \frac{d x^k}{d \sigma} = \frac{E p_\theta^2}{\big( rE + \alpha \big)^3} + \frac{\alpha}{\big( r E + \alpha \big)^2}
\end{equation}
showing that the RHS of (\ref{radeq}) matches that of (\ref{lifteom}), and our analysis is consistent.

\numberwithin{equation}{subsection}

\subsection{Bohlin transformation and duality}

The Bohlin transformation is a canonically converts the dynamics of the oscillator system into that of the Kepler system and vice versa. We shall see how the Jacobi metric for a fixed energy following a canonical transformation demonstrates this as shown in \cite{avt}. \\ \\
The transformation rule involves expressing the co-ordinates as a complex variable:
\begin{equation}
r = q_1 + i q_2
\end{equation}
The canonical transformation we shall use as shown in \cite{mls} is:
\begin{align}
r \quad \longrightarrow \quad z = \frac{r^2}2 = \bigg( \frac{q_1^2 - q_2^2}2 \bigg) &+ i \big( q_1 q_2 \big) = x + i y \qquad
\begin{cases}
x = \dfrac{q_1^2 - q_2^2}2 \\
y = q_1 q_2
\end{cases} \\
 x^2 + y^2 &= \frac{\big( q_1^2 + q_2^2 \big)^2}4, \quad \hbox{ or } \quad 2\sqrt{x^2 + y^2} = q_{1}^{2} + q_{2}^{2}.
\end{align}
For the covariant momentum, in accordance with Bohlin's transformation rule:
\begin{align}
 \begin{array}{c}
p_1 = \dfrac{\partial x}{\partial q_1} p_x + \dfrac{\partial y}{\partial q_1} p_y = \ q_1 p_x + q_2 p_y \\ \\
p_2 = \dfrac{\partial x}{\partial q_2} p_x + \dfrac{\partial y}{\partial q_2} p_y = -  q_2 p_x + q_1 p_y 
\end{array} \Bigg\}
& \qquad p = p_1 + i p_2 = \big( q_1 - i q_2 \big) \big( p_x + i p_y \big)
\end{align}
%\begin{align}
%\therefore \qquad p_1^2 + p_2^2 = \big( q_1^2 + q_2^2 \big) &\big( p_x^2 + p_y^2 \big) = 2 \sqrt{ x^2 + y^2 } \big( p_x^2 + p_y^2 \big)  \nonumber
%\end{align}
This transformation can also be written in matrix form as:
\begin{equation}
\left({\begin{array}{c}
p_x \\ p_y
\end{array} } \right) =
\frac1{q_1^2 + q_2^2}\left({\begin{array}{cc}
q_1 & - q_2 \\
q_2 & q_1
\end{array} } \right)
\left({\begin{array}{c}
p_1 \\ p_2
\end{array} } \right)
\end{equation}
Thus we obtain
\begin{align}
\qquad \frac{p_1^2 + p_2^2}{q_1^2 + q_2^2} =  p_x^2 + p_y^2. 
\end{align}
Let $H(p,q)$ be any Hamiltonian and fix the energy $E$. Let us consider flow by the 
reparametrization $\frac{dt}{d\tau} = f(q,p)$
This immediately yields
$$
{\widetilde H}(p,q) = f(p,q)(H(p,q) - E), $$
which retains the zero energy surface on the level set of $H$ to the energy $E$
$$
H^{-1}(E) = \{ (p,q) | H(p,q) = E \}.
$$
If the oscillator Hamiltonian is given as
\begin{equation}
H_{osc}( q_i, p_i ) = \frac12 \big( p_1^2 + p_2^2 \big) + \frac{a}2 \big( q_1^2 + q_2^2 \big) - b
\end{equation}
The transformation (3.14) maps the Hamiltonian of the oscillator equation to that of 
Kepler,
\begin{equation}\label{kep3.21}
H_{kepler}( x,p) = p_x^2 + p_y^2 - \frac{b}{2\sqrt{ x^2 + y^2 }} + a,
\end{equation}
thus (3.14) can be considered to be the Bohlin transformed co-ordinates and
for the time being we assume $ r = \sqrt{ x^2 + y^2 } \neq 0$. This clearly yields the transformation of the oscillator hamiltonian into the Kepler Hamiltonian. \\ \\
If $a$ and $b$ are treated as new momenta then the null lift of the (\ref{kep3.21}), given by
\begin{equation}\label{kep3.22}
{\widetilde H}( x,p) = p_x^2 + p_y^2 - \frac{pz^2}{\sqrt{ x^2 + y^2 }} + p_{a}^{2},
\end{equation} 
where we have added two new conjugate variables $(z,p_z)$, $(a,p_a)$ and corresponding 
momenta being conserved. Recently, Cariglia \cite{carig} made a fine observation to connect all the 
energy (positive, null and negative) regimes of Kepler orbit by introducing an additional 
conjugate pair. This one can be done if we replace $(a,p_a)$ pair by two additional conjugate 
pair $(\alpha, p_{\alpha})$ and $(\gamma,p_{\gamma})$ and Hamiltonian ${\widetilde H}( x,p)$ is replaced by
$${\widetilde {\cal H}}( x,p) = p_x^2 + p_y^2 - \frac{pz^2}{\sqrt{ x^2 + y^2 }} - p_{\alpha}^{2} + p_{\gamma}^{2}. $$
%\begin{equation}
%H ( x_a, p_a ) = \sqrt{ x^2 + y^2 } \bigg[ \big( p_x^2 + p_y^2 \big) + a^2 \bigg] + b
%\end{equation}
%Thus, since $V (q_1, q_2) = \dfrac{a^2}2 \big( q_1^2 + q_2^2 \big) + b = a^2 \sqrt{ x^2 + y^2 } + b$, if the Energy $E = b$, then the new Hamiltonian $\mathcal{H}$ according to (\ref{newham}) is given by:
%
%\begin{equation}
%\mathcal{H} ( x, y ) = \frac{H ( x, y )}{\big| E - U (x) \big|} = \frac{\big( p_x^2 + p_y^2 \big)}{a^2} + \frac b{2 a^2 \sqrt{ x^2 + y^2 }} + 1
%\end{equation}
%Thus, we have converted the oscillator hamiltonian into the Kepler Hamiltonian using Bohlin's canonical transformation followed by the computation of the Jacobi-Maupertius metric that projects the system onto a specific fixed energy surface.
%

\subsection{Contact method, reparametrization and regularization}

A contact form $\alpha$ on a $(2n+1)$-dimensional manifold $M$ is a Pfaffian form satisfying 
$\alpha \wedge (d\alpha)^n \neq 0$. The contact distribution is given by ${\cal C}|_{U} = \hbox{ Ker }\alpha |_{U}$, where $U$ is the open 
set in $M$. Given a contact form $\alpha$ , the Reeb vector field $Z$ is a vector field uniquely 
defined by   
\begin{equation}
i_Z\alpha =1, \qquad i_Z d\alpha = 0.
\end{equation}
Here we are interested in problem of closed Hamiltonian trajectories on a fixed energy 
$H = E$ surface, so we follow Weinstein's method. Let $P^{2n}$ be the total space of the 
principle ${\Bbb R}^{\ast}$-bundle $\pi : P \to M$, whose fibers are non-zero covectors $(q,p)$ that vanish 
on the contact element ${\cal C}(x)$ in $M$. The symplectization $P$ has a canonical $1$-form $\alpha$, 
restriction of Liouville $1$-form, and the symplectic form is given by $\omega = d\alpha$. Consider 
the multiplicative ${\Bbb R}^{\ast}$ action on $(P,\omega)$, from the nongeneracy of $\omega$, there exist a unique 
vector field $Y$, called the Liouville vector field, which satisfies the following identities:
\begin{equation} 
i_Y\omega = \alpha, \qquad \alpha(Y) = 0, \qquad L_Y\omega = \omega.
\end{equation} 
Since the Reeb vector field $Z$ is a section of $\hbox{ Ker}d\alpha|_M = 0$, hence it is proportional 
to $X_H|_M$. $Z$ can be manifested as a flow of $X_H|_M$ after a time reparametrization 
$dt = f(q,p) \ d\tau$ introduced earlier. Thus we obtain 
$$
Z(x) = \frac{dx}{d\tau} = \frac{dx}{dt}\frac{dt}{d\tau} = f(x)X_H(x), \qquad x=(q,p).
$$
\begin{claim}
The Reeb vector field $Z$ is 
\begin{equation}
Z = \frac{X_H}{Y(H)}, \qquad \hbox{ \text{where} } \qquad f(x) = \frac{1}{Y(H)}.
\end{equation}
\end{claim}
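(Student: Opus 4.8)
The plan is to fix the single undetermined scalar $f$ in the relation $Z = f\,X_H$ on $M$ — which already holds by the proportionality established just above the claim — by imposing the two defining properties of the Reeb field, $i_Z\alpha = 1$ and $i_Z\,d\alpha = 0$. Only the first carries content: it forces $f\,\alpha(X_H) = 1$, so the whole claim reduces to evaluating the function $\alpha(X_H)$ along the energy surface $M = \{H = E\}$. (Note first that $X_H$ is tangent to $M$, since $dH(X_H) = -(i_{X_H}\omega)(X_H) = 0$, so $\alpha(X_H)$ makes sense pointwise on $M$.)

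The key step is the identity $\alpha(X_H) = Y(H)$. Here I would use that the contact form on $M$ is the restriction of the canonical (Liouville) one-form on the symplectization $P$, together with the characterization $i_Y\omega = \alpha$ of the Liouville vector field. Then $\alpha(X_H) = (i_Y\omega)(X_H) = \omega(Y,X_H) = -\,\omega(X_H,Y) = -\,(i_{X_H}\omega)(Y)$, and with the sign convention $\omega = \sum_i dp_i\wedge dx^i$ used throughout the paper Hamilton's equations read $i_{X_H}\omega = -dH$, whence $\alpha(X_H) = dH(Y) = Y(H)$. Substituting into $f\,\alpha(X_H) = 1$ gives $f = 1/Y(H)$ and therefore $Z = X_H/Y(H)$, which is exactly the statement. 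The second Reeb condition is then automatic: restricted to $TM$ one has $i_Z\,d\alpha = f\,i_{X_H}\omega = -f\,dH$, which vanishes because $M$ is a level set of $H$; this is consistent with the given fact that $Z$ is a section of $\ker d\alpha|_M$.

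The main obstacle is not the computation but the bookkeeping between $(P,\omega)$ and $(M,\alpha)$ — verifying that the symplectization's Liouville one-form genuinely restricts to the contact form appearing in the Reeb equations, and that the sign conventions for $\omega$, the tautological form and $X_H$ are applied consistently — together with the tacit regularity hypothesis $Y(H)\neq 0$ along $M$. This last condition is precisely Weinstein's contact-type requirement: it says $X_H|_M$ is transverse to the contact distribution $\ker\alpha$, so that it can indeed be a nonvanishing multiple of the Reeb field, and it is exactly the hypothesis under which the reparametrization $dt = f(q,p)\,d\tau$ introduced earlier, with $f = 1/Y(H)$, is well defined. I would state this hypothesis explicitly at the outset of the proof.
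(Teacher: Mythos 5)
Your proposal is correct and follows essentially the same route as the paper: both fix the proportionality factor $f$ by imposing $\alpha(Z)=1$ and using $i_Y\omega=\alpha$ to get $1=\omega(Y,fX_H)=f\,dH(Y)=f\,Y(H)$. Your additional remarks (checking $i_Z d\alpha=0$ on the level set and flagging the contact-type hypothesis $Y(H)\neq 0$) are sound refinements of the same argument rather than a different approach.
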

{\bf Proof :} By definition we know $\omega(Y, \cdot ) = \alpha$ and $\alpha(Z) = 1$. Thus we obtain
$$
1 = \alpha(Z) = \omega(Y,f(x)X_H) = f(x)\omega(Y,X_H) = f(x)dH(Y) = f(x)Y(H).
$$
$\Box$
The function $H_0 = H-E/Y(H)$ is defined on $M$ as an invariant surface. Then the 
vector field $X_{H_0}|_M$ is equal to the Reeb vector field $Z$.

\subsubsection{Application to Kepler equation}

Consider a special symplectic transformation $(\bm{p,q}) \to (-\bm{q,p})$\footnote{This transfomation appears in Moser's work on regularization of Kepler orbit}. It is easy to check that 
this transformation leaves the symplectic form:
$$
\omega = d\alpha = \sum_{i=1}^{n}dp_i \wedge dq_i = \sum_{i=1}^{n}-dq_i \wedge dp_i = \sum_{i=1}^{n}d(-q_idp_i) = d{\widetilde \alpha}.
$$ 
The associated Liouville vector field is $Y = \sum_{i=1}^{n} q^i \partial_{q^i}$, which satisfies $\omega(Y, \cdot) = {\widetilde \alpha}$.
It is easy to check that for Kepler Hamiltonian $H = \frac{1}{2}|{\bm p}|^2 - \dfrac{\beta}{|{\bm q}|}$,
$$
Y(H) = \sum_{i=1}^{n}q_i\frac{\partial H}{\partial q_i} = \sum_{i=1}^{n} \big( q_i \big)^2 \frac{\beta}{|{\bm q}|^3} = \frac{\beta}{|{\bm q}|}.
$$
Thus on isoenergetic surface we obtain
$$
\frac{H - E}{Y(H)} = \frac{1}{\frac{\beta}{|{\bm q}|}}\big(\frac{1}{2}|{\bm p}|^2 - \frac{\beta}{|{\bm q}|}\big) = (|{\bm p}|^2 -2E)\frac{|{\bm q}|}{\beta} - 1
= H_0.
$$
Consider a smooth function 
\begin{equation}
F = (H_0 + 1)^2/2 = \frac{(|{\bm p}|^2 - 2E)^2}{8\beta^2}|{\bm q}|^2.
\end{equation}
On the fixed energy surface $H=E$, $F$ becomes $F|_{M_E} = \frac{1}{2}$.
 The trajectories of the Hamiltonian flow of $F$ on the isoenergetic surface 
are governed by the reparametrized time $\tau$. The Hamiltonian vector fields of $F$ and $H_0$  coincide on the
level hypersurface $F = 1/2$ or equivalently $H_0 = 0$. One can easily check
$$
X_F= \frac{|\bm{q}|}{\beta}p_i\frac{\partial}{\partial q_i} - \frac{q_i}{|\bm{q}|^2}\frac{\partial}{\partial p_i} 
= \frac{|\bm{q}|}{\beta}\big((p_i\frac{\partial}{\partial q_i} - \frac{\beta q_i}{|\bm{q}|^3}\frac{\partial}{\partial p_i} \big) $$
$$
= p_i\frac{\partial}{\partial q_i} - \frac{\beta q_i}{|\bm{q}|^3}\frac{\partial}{\partial p_i} / \frac{\beta}{|\bm{q}|} = X_H/Y(H).
$$
Thus we establish regularization theorem due to Moser.
\begin{theorem}
On the isoenergetic surface $F = 1/2$ the trajectories of the
Hamiltonian flow of the function $F = \frac{(|{\bm p}|^2 - 2E)^2}{8\beta^2}|{\bm q}|^2$ traversed
in time $\tau$ equal to trajectories of the Hamiltonian flow of the function $H = \frac{1}{2}|{\bm p}|^2 - \frac{\beta}{|{\bm q}|}$
traversed in real time $t$, and these two times are connected by
$$ \frac{d\tau}{dt} = \frac{\beta}{|{\bm q}|}. $$
\end{theorem}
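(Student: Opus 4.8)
The plan is to observe that, with the vector-field computation displayed immediately above, the theorem reduces to a reparametrization lemma: on the common hypersurface $\{F = 1/2\}$ one has the identity $X_F = X_H / Y(H) = \tfrac{|{\bm q}|}{\beta} X_H$, and it remains only to convert this proportionality of vector fields into the asserted rescaling of time along integral curves.

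First I would pin down the hypersurfaces involved. Since $Y(H) = \beta/|{\bm q}|$ is nonzero for ${\bm q} \neq 0$, the function $H_0 = (H-E)/Y(H)$ vanishes precisely on $\{H = E\}$, while $F = \tfrac12 (H_0+1)^2 = \tfrac12$ is equivalent to $H_0 \in \{0, -2\}$; on the branch relevant to Kepler motion $H_0 + 1 = (|{\bm p}|^2 - 2E)\,|{\bm q}|/\beta > 0$, because $H = E$ forces $\tfrac12 |{\bm p}|^2 = E + \beta/|{\bm q}| > E$. Hence $\{F = 1/2\} = \{H_0 = 0\} = \{H = E\} = M_E$. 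Next, the chain rule for Hamiltonian vector fields gives $X_F = (H_0+1)\,X_{H_0}$, so $X_F$ and $X_{H_0}$ agree on $M_E$; and the Leibniz rule applied to $H_0 = \tfrac1\beta (H-E)\,|{\bm q}|$ gives $X_{H_0} = \tfrac{|{\bm q}|}{\beta} X_H + \tfrac{H-E}{\beta} X_{|{\bm q}|}$, whose last term vanishes on $M_E$. Combining, $X_F|_{M_E} = \tfrac{|{\bm q}|}{\beta} X_H = X_H / Y(H)$, which is exactly the identity checked just before the statement (equivalently, by the Claim, $X_F|_{M_E}$ is the Reeb field of the contact form carried by $M_E$).

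Then I would carry out the reparametrization. Let $\gamma(\tau) = ({\bm q}(\tau), {\bm p}(\tau))$ be an integral curve of $X_F$ issuing from a point of $M_E$. Since $\{F, F\} = 0$, $F$ is constant along $\gamma$, so $\gamma$ stays on $M_E$, where $\tfrac{d\gamma}{d\tau} = X_F(\gamma) = \tfrac{|{\bm q}(\tau)|}{\beta} X_H(\gamma)$. Introduce a new time $t = t(\tau)$ by $\tfrac{dt}{d\tau} = \tfrac{|{\bm q}(\tau)|}{\beta}$; for an attractive potential ($\beta > 0$) and away from collisions this right-hand side is strictly positive, so $\tau \mapsto t$ is a diffeomorphism onto an interval, with inverse satisfying $\tfrac{d\tau}{dt} = \tfrac{\beta}{|{\bm q}|}$. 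Regarding the same point set as a curve of $t$, the chain rule yields $\tfrac{d\gamma}{dt} = \tfrac{d\gamma}{d\tau}\tfrac{d\tau}{dt} = \tfrac{|{\bm q}|}{\beta} X_H \cdot \tfrac{\beta}{|{\bm q}|} = X_H(\gamma)$, so this reparametrized curve is an integral curve of the Kepler field $X_H$ on $M_E$; running the argument in reverse shows that every Kepler orbit on $M_E$, reparametrized by $\tfrac{d\tau}{dt} = \tfrac{\beta}{|{\bm q}|}$, is an integral curve of $X_F$. Thus the $F$-flow traversed in $\tau$ and the $H$-flow traversed in $t$ trace out the same trajectories on $F = 1/2$, related by $\tfrac{d\tau}{dt} = \tfrac{\beta}{|{\bm q}|}$, which is the assertion.

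There is no serious obstacle here: the only points needing care are the positivity of $|{\bm q}|/\beta$, so that the time change is a genuine diffeomorphism, and working on the open set ${\bm q} \neq 0$ where $Y(H)$ and hence the reparametrization are defined. It is worth remarking — though this goes beyond what the theorem literally states — that the real force of Moser's construction is that $F = \tfrac{(|{\bm p}|^2 - 2E)^2}{8\beta^2}|{\bm q}|^2$ is a polynomial in $({\bm q}, {\bm p})$, hence extends smoothly across the collision locus ${\bm q} = 0$, so that its Hamiltonian flow regularizes exactly the orbits that $X_H$ fails to continue.
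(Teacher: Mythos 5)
Your proposal is correct and follows essentially the same route as the paper: both reduce the theorem to the identity $X_F\big|_{M_E} = X_H/Y(H) = \tfrac{|{\bm q}|}{\beta}\,X_H$ on the fixed energy surface and then convert this proportionality of vector fields into the time relation $\tfrac{d\tau}{dt} = \tfrac{\beta}{|{\bm q}|}$. The only differences are presentational: the paper verifies the identity by direct coordinate computation of $X_F$ restricted to $H=E$, whereas you derive it from the chain and Leibniz rules applied to $F=(H_0+1)^2/2$ and $H_0=(H-E)|{\bm q}|/\beta$, and you additionally spell out the flow-invariance of the energy surface and the reparametrization step that the paper leaves to its preceding general discussion of $dt = f({\bm q},{\bm p})\,d\tau$ with $f = 1/Y(H)$.
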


\subsection{Houri's canonical transformation}

Another canonical transformation that can be applied to the Kepler problem, as performed by Tsuyoshi Houri in \cite{th}, involves swapping the position and momentum phase-space co-ordinates.
\begin{equation}
\widetilde{x}^i = p_i \qquad \widetilde{p}_i = x^i
\end{equation}
Thus, the Kepler hamiltonian will transform as:
\begin{equation}
H = \frac12 \big( p_1^2 + p_2^2 + p_3^2 \big) + \frac \alpha r \quad \longrightarrow \quad \frac12 \Big[ \big( \widetilde{x}^1 \big)^2 + \big( \widetilde{x}^2 \big)^2 + \big( \widetilde{x}^3 \big)^2 \Big] + \frac \alpha{\sqrt{\widetilde{p}_1^2 + \widetilde{p}_2^2 + \widetilde{p}_3^2}}
\end{equation}
As a result, if we choose a fixed energy surface $H = E$ we can further say:
\begin{equation}
\label{houriham} \widetilde{H} = \bigg[ E - \frac{\widetilde{r}^2}2 \bigg]^2 \big( \widetilde{p}_1^2 + \widetilde{p}_2^2 + \widetilde{p}_3^2 \big) = \alpha^2 \hspace{1.5cm} \widetilde{r}^2 = \big( \widetilde{x}^1 \big)^2 + \big( \widetilde{x}^2 \big)^2 + \big( \widetilde{x}^3 \big)^2
\end{equation}
Thus, the related metric with constant curvature $4 E$ on a fixed energy surface is:
\begin{equation}
\widetilde{g}^{ij} (\bm{x}) = \bigg[ E - \frac{\widetilde{r}^2}2 \bigg]^2 \delta^{ij} \hspace{2cm} \widetilde{g}_{ij} (\bm{x}) = \bigg[ E - \frac{\widetilde{r}^2}2 \bigg]^{-2} \delta_{ij}
\end{equation}
So the metric is given by
\begin{equation}
\label{hourimet} ds^2 = \bigg(  E - \frac{|{\bm x}|^2}{2}\bigg)^{-2} d{\bm x}^2 \end{equation}
 If we set the energy to be $E = - \dfrac{k^2}2$, we obtain
\begin{equation}
\label{hourimet1} ds^2 = {4}\big( k + |{\bm x}|^2\big)^{-2} d{\bm x}^2 
\end{equation}
Let $M_k$ be the space of constant curvature manifold. It is known that the Kepler phase space geodesically
incomplete, since in the collision orbits, the particle arrives to the attractive
center with infinite velocity in a finite time, hence does not admit a transitive group of motion.
The mapping of the inversion
$$
I_k : M_k/ \{0\} \to \widehat{M}_k / \{0\},
$$
and $ \bm{x} \to \dfrac{\bm{x}}{|\bm{x}|^2}$ realizes  isometry between its source metric $g$ and the target metric ${\widehat g}$. \\ \\
Suppose 
$$(I_k)_{\ast} : p \mapsto \frac{\bm p}{|\bm{x}|^2} - 2\frac{\bm{x}}{|\bm{x}|^4} \langle \bm{x, p} \rangle, $$ then  one can easily check that
\[ \begin{split}
{\widehat g}_{I(q)} (I_{\ast}x, I_{\ast}x) &= \frac{4}{(1 + k|\bm{x}|^2)^2} \langle I_{\ast}x,I_{\ast}x \rangle \\
&= \frac{4}{\bigg(1 + k\dfrac{1}{|\bm{x}|^2}\bigg)^2} \frac{|\bm{p}|^2}{|\bm{x}|^4} = 4(|\bm{x}|^2 + k)^{-2}|\bm{p}|^2 \ = \ g_q(\bm{x.x}).
\end{split} \]
This describes another conformally flat metric. The question that arises here is; How to 
connect with the Milnor construction? \\ \\
If we set the energy to be $E = - \dfrac{k^2}2$, then we will have
\begin{equation}
\label{houriham} \widetilde{H} = 4 \bigg[ k^2 + \widetilde{r}^2 \bigg]^2 \big( \widetilde{p}_1^2 + \widetilde{p}_2^2 + \widetilde{p}_3^2 \big) = \alpha^2
\end{equation}
If we choose the reparameterization as:
\begin{equation}
\frac{dt}{d \tau} = \frac{r}k
\end{equation}
Then we will have the new Hamiltonian as:
\begin{align}
\mathcal{H} &= \frac rk \bigg( H + \frac{k^2}2 \bigg) = \frac rk \bigg( \frac{|\bm{p}|^2}2 - \frac{\alpha}r + \frac{k^2}2 \bigg) = \frac r{2k} \big( |\bm{p}|^2 + k^2 \big) - \frac{\alpha}k \nonumber \\ \nonumber \\ 
&\qquad \mathbb{H} = k \mathcal{H} + \alpha = \frac r2 \big( |\bm{p}|^2 + k^2 \big)
\end{align}
However, Houri's approach does not preserve the form of equations of the motion or 
geodesic flow operator. That requires another step with Milnor's construction \cite{Milnor}.

\subsection{Milnor's construction}

We shall now separately formulate the Kepler problem under Milnor's construction \cite{Milnor}, which essentially involves a momentum inversion. From this formulation we shall write the metric and the trajectory equation in terms of inverse momentum. \\ \\
The Kepler equation implies:
\begin{equation}
\frac{d \bm{p}}{d t} = - \alpha \frac{\bm{x}}{r^3} \hspace{2cm} \bigg| \frac{d \bm{p}}{d t} \bigg| = \frac{\alpha}{r^2}
\end{equation}
Levi-Civita showed that it is possible to simplify Kepler solutions by introducing a fictitious parameter $\sigma$ such that:
\begin{equation}
\frac{d \sigma}{dt} = \frac1r
\end{equation}
This makes the reparameterized Kepler equation of motion:
\begin{align}
\label{rekep} \frac{d \bm{p}}{d \sigma} = - \alpha \frac{\bm{x}}{r^2} &= \bigg( E - \frac{\big| \bm{p} \big|^2}2 \bigg) \frac{\bm{x}}r \qquad \Rightarrow \qquad \bigg| \frac{d \bm{p}}{d \sigma} \bigg| = \frac{\alpha}r = \frac{\big| \bm{p} \big|^2}2 - E \\ \nonumber \\
\label{milnor} & \therefore \qquad ds^2 = 4 \Big[ 2 E - \big| \bm{p} \big|^2 \Big]^{-2} \big| d \bm{p} \big|^2
\end{align}
Thus, there is one and only one metric on $M_E$ that satisfies our condition. Comparing (\ref{milnor}) result with the Houri's formulation (\ref{hourimet}), we can see that they are identical, except for a swap between momentum and co-ordinate. To describe events in the neighbourhood of infinity, we shall work with the inverted momentum co-ordinate.
\begin{align} \displaybreak[0]
\label{invmom} \bm{w} = \frac{\bm{p}}{\big| \bm{p} \big|^2}, &\qquad \big| \bm{w} \big|^2 = \frac1{\big| \bm{p} \big|^2}, \qquad 2 E \big| \bm{w} \big|^2 < 1 \\ \nonumber \\
\label{diff} \therefore \hspace{0.5cm}  \bm{p} = \frac{\bm{w}}{\big| \bm{w} \big|^2} &\qquad d \bm{p} = \frac{d \bm{w}}{\big| \bm{w} \big|^2} - 2 \frac{\big( \bm{w} . d\bm{w} \big) \bm{w}}{\big| \bm{w} \big|^4} \qquad \big| d \bm{p} \big|^2 = \frac{\big| d \bm{w} \big|^2}{\big| \bm{w} \big|^4}
\end{align}
Using (\ref{rekep}), (\ref{invmom}) and (\ref{diff}) and defining $( \ )' = \frac{d \ }{d \sigma}$, we will get:
\begin{align}
& \qquad \bm{p}' = \bigg( E - \frac{\big| \bm{p} \big|^2}2 \bigg) \frac{\bm{x}}r = \frac{2 E \big| \bm{w} \big|^2 - 1}{2 \big| \bm{w} \big|^2} \frac{\bm{x}}r \nonumber \\ \nonumber \\
\label{traj1} \Rightarrow &\qquad \frac{\bm{w}'}{\big| \bm{w} \big|^2} - 2 \frac{\big( \bm{w} . \bm{w}' \big) \bm{w}}{\big| \bm{w} \big|^4} = \frac{2 E \big| \bm{w} \big|^2 - 1}{2 \big| \bm{w} \big|^2} \frac{\bm{x}}r \\
\label{traj2} \text{and} \qquad \frac{\big| \bm{w}' \big|^2}{\big| \bm{w} \big|^4} &= \bigg( \frac{2 E \big| \bm{w} \big|^2 - 1}{2 \big| \bm{w} \big|^2} \bigg)^2 \qquad \Rightarrow \qquad 4 \big| \bm{w}' \big|^2 = \big( 2 E \big| \bm{w} \big|^2 - 1 \big)^2
\end{align}
If we now substitute the fixed energy level $E = - \dfrac{k^2}2$ in (\ref{traj2}), then we will have the metric in terms of the inverse momentum given as:
\begin{equation}
d s^2 = 4 \Big( 1 + k^2 \big| \bm{w} \big|^2 \Big)^{-2} \big| d \bm{w} \big|^2
\end{equation}
which is the inverse-momentum version of (\ref{milnor}) and a constant mean-curvature metric. From (\ref{traj1}), we get the trajectory equation in terms of inverse momentum as:
\begin{equation}
\bm{x} = \frac{\big| \bm{w} \big|^2 \bm{w}' - 2 \big( \bm{w} . \bm{w}' \big) \bm{w}}{\big| \bm{w} \big|^2 \Big( 2 E \big| \bm{w} \big|^2 - 1 \Big)} r = 2 \alpha \frac{2 \big( \bm{w} . \bm{w}' \big) \bm{w} - \big| \bm{w} \big|^2 \bm{w}'}{\Big( 1 - 2 E \big| \bm{w} \big|^2 \Big)^2}
\end{equation}
Thus, $\bm{x}$ can be epressed as a smooth function of the parameter $\sigma$. If we use $t$ in place 
of $\sigma$, the function stops being smooth only at the point $\bm{x} = 0$. 

\subsection{Geodesic flow}

Now we will see if the form of geodesic flow is preserved after using momentum inversion 
upon Houri's canonical transformation. The Hamiltonian (\ref{houriham}) describing geodesics on 
such spaces under a momentum inversion for $E = - {k}$ \cite{kbs} is given by
\begin{equation}
\widetilde{H} = \frac14 \big( 1 + k \big| \bm{x} \big|^2 \big)^2 \big| \bm{p} \big|^2
\end{equation}
%setting $\beta = \dfrac{\widetilde{H}^{\frac12}}{k^2 x_i}$
From this Hamiltonian, setting we can derive the Hamiltonian flow vector field
\[ \begin{split}
X_{\widetilde{H}} &= \frac{\partial \widetilde{H}}{\partial p_i} \frac{\partial \ }{\partial x^i} - \frac{\partial \widetilde{H}}{\partial x^i} \frac{\partial \ }{\partial p_i} \\
&= \frac{1}{2}\big( 1 + k \big| \bm{x} \big|^2 \big)^2 p^i \frac{\partial \ }{\partial x^i} - 
\big( 1 + k \big| \bm{x} \big|^2 \big) \big| \bm{p} |^2 k x_i \frac{\partial \ }{\partial p_i} \\
&= 2 \widetilde{H}\frac{p^i}{\big| \bm{p} \big|^2} \frac{\partial \ }{\partial x^i} - 2k\widetilde{H}^{\frac12}| \bm{p} \big| x_i
\frac{\partial \ }{\partial p_i} \hspace{0.5cm} = \quad 2\widetilde{H}^{\frac12} \big| \bm{p} \big| \bigg[\frac{p^i}{\big| \bm{p} \big|^3} \frac{\partial \ }{\partial x^i} - kx_i \frac{\partial \ }{\partial p_i} \bigg]
\end{split} \]
Thus we finally obtain
\begin{equation}
\therefore \qquad \big( 2\widetilde{H}^{\frac12} \big| \bm{p} \big| \big)^{-1} X_{\widetilde{H}} = \frac{p^i}{\big| \bm{p} \big|^3} \frac{\partial \ }{\partial x^i} - x_i \frac{\partial \ }{\partial p_i}
\end{equation}
Comparing the flow operator above with the geodesic flow in (\ref{kepflow}), we obtain the quasi-
Hamiltonian vector field
of Kepler equation in momentum space
\begin{equation}
X_{Kepler}^{mom} = \big( 2 k^2 \widetilde{H}^{\frac12} \big| \bm{p} \big| \big)^{-1} X_{\widetilde{H}}
\end{equation}
Thus, we can see that combining Houri's transformation with Milnor's momentum inversion preserves the form of the geodesic flow, aside from a momentum factor.
%&= 2 \widetilde{H} \big| \bm{p} \big| \bigg[ \frac{p^i}{\big| \bm{p} \big|^3} \frac{\partial \ }{\partial x^i} - \frac{k^2 x_i}{\widetilde{H}^{\frac12}} \frac{\partial \ }{\partial p_i} \bigg] \\
\smallskip
A vector field ${X}$ on a symplectic manifold $(M, \omega)$ is quasi-Hamiltonian if
there exists a (nowhere-vanishing) function $\Lambda$ such that ${X}$ is a Hamiltonian vector field
$\Lambda X \in {\cal X}_H(M )$, thus $i_{\Lambda X} \omega = dH$.
This condition can alternatively be written as as $i_X (\Lambda \omega) = dH$, but the point is that
the 2-form $\Lambda \omega$ is not closed in the general case. \\
\smallskip
By applying the special canonical transformation that interchanges ${\bm x}$ and ${\bm p}$, the Kepler equation on momentum space 
transforms to the usual Kepler equation with the Hamiltonian
$$
H = \frac{1}{4} \big( k + |{\bm p}|^2 \big)^2 |{\bm x}|^2.
$$
Finally, we will explore the results of parameterizing the JM metric and Kepler equation with the eccentric anomaly.

\subsection{JM metric and Kepler equation parametrized by eccentric anomaly}

The Kepler Hamiltonian is given as
\begin{equation}
H = \frac12 \sum_{n=1}^3 \big( p_n \big)^2 - \frac \alpha r \hspace{2cm} r^2 = \sum_{n=1}^3 \big( x^n \big)^2
\end{equation}
Let us perform the following canonical transformation:
\begin{equation}
x^i \longleftrightarrow p_i : \qquad \big( x^i, p_i \big) = \big( \widetilde{p}_i, \widetilde{x}^i \big)
\end{equation}
Setting $H = E$, this transformation allows us to write a new Hamiltonian $\widetilde{H}$ as
\[ \begin{split}
E = \frac12 \sum_{n=1}^3 \big( \widetilde{x}^n \big)^2 - \frac \alpha{\sqrt{\sum_{n=1}^3 \widetilde{p}_n^2}} \qquad &\Rightarrow \qquad E - \frac12 \sum_{n=1}^3 \big( \widetilde{x}^n \big)^2 = - \frac \alpha{\sqrt{\sum_{n=1}^3 \widetilde{p}_n^2}} \\
\Rightarrow \qquad \widetilde{H} = \bigg[ E - \frac12 \sum_{n=1}^3 &\big( \widetilde{x}^n \big)^2 \bigg]^2 \sum_{n=1}^3 \widetilde{p}_n^2 = \alpha^2
\end{split} \]
The Hamilton’s equations for this canonically transformed system (\ref{houriham}) are:
\begin{equation} \label{hameq}
\begin{split}
\dot{\widetilde{x}}^i &= \frac{\partial \widetilde{H}}{\partial \widetilde{p}_i} = 2 \bigg[ E - \frac12 \sum_{n=1}^3 \big( \widetilde{x}^n \big)^2 \bigg]^2 \widetilde{p}_i \\
\dot{\widetilde{p}}_i &= \frac{\partial \widetilde{H}}{\partial \widetilde{x}^i} = 2 \bigg[ E - \frac12 \sum_{n=1}^3 \big( \widetilde{x}^n \big)^2 \bigg] \bigg( \sum_{n=1}^3 \widetilde{p}_n^2 \bigg) \widetilde{x}^i \\
\end{split}
\end{equation}
To proceed to equations of motion, we shall use (\ref{hameq}) to write:
\[ \begin{split}
\ddot{\widetilde{x}}^i &= 2 \frac{d \ }{d t} \bigg[ \bigg\{ E - \frac12 \sum_{n=1}^3 \big( \widetilde{x}^n \big)^2 \bigg\}^2 \widetilde{p}_i \bigg] \\
&= - 4 \bigg\{ E - \frac12 \sum_{n=1}^3 \big( \widetilde{x}^n \big)^2 \bigg\} \bigg( \sum_{k = 1}^3 \widetilde{x}^k \dot{\widetilde{x}}^k \bigg) \widetilde{p}_i + 2 \bigg\{ E - \frac12 \sum_{n=1}^3 \big( \widetilde{x}^n \big)^2 \bigg\}^2 \dot{\widetilde{p}}_i \\
&= - 2\bigg\{ E - \frac12 \sum_{n=1}^3\big( \widetilde{x}^n \big)^2 \bigg\}^{-1} \widetilde{x}^k \dot{\widetilde{x}}^k \bigg) \widetilde{x}^i 
+ 4 \bigg\{ E - \frac12 \sum_{n=1}^3 \big( \widetilde{x}^n \big)^2 \bigg\} \widetilde{H} \widetilde{x}^i \\
& =- 2\frac{\big(\widetilde{\bm x} \cdot \dot{\widetilde{\bm x}}\big)\cdot \dot{\widetilde{\bm x}}}{\Lambda} + 4\Lambda {\widetilde H}{\widetilde{\bm x}}
\end{split} \]
where $\Lambda =  (E - \frac12 \sum_{n=1}^3 \big( \widetilde{x}^n \big)^2)$.
Let us write $\widetilde{\bm x}$ as ${\bm x}$, hence we obtain
 
%&= - 8 \bigg\{ E - \frac12 \sum_{n=1}^3 \big( \widetilde{x}^n \big)^2 \bigg\}^3 \bigg( \sum_{k = 1}^3 \widetilde{x}^k \widetilde{p}_k \bigg) \widetilde{p}_i + 4 \bigg\{ E - \frac12 \sum_{n=1}^3 \big( \widetilde{x}^n \big)^2 \bigg\}^3 \bigg( \sum_{n=1}^3 \widetilde{p}_n^2 \bigg) \widetilde{x}^i \\
%&= 4 \bigg\{ E - \frac12 \sum_{n=1}^3 \big( \widetilde{x}^n \big)^2 \bigg\}^3 \bigg[ \bigg( \sum_{n=1}^3 \widetilde{p}_n^2 \bigg) \widetilde{x}^i - 2 \bigg( \sum_{k = 1}^3 \widetilde{x}^k \widetilde{p}_k \bigg) \widetilde{p}_i \bigg]
%\end{split} \]
\begin{equation}
\label{equation} \ddot{{\bm x}} = -2\frac{\big({\bm x} \cdot \dot{{\bm x}}\big)\cdot \dot{{\bm x}}}{\Lambda} + 
4\Lambda {\widetilde H}{{\bm x}}.
\end{equation}
It is known that  the Laplace Lenz Runge vector
\begin{equation}\label{LRL}
A({\bm x}, \dot{\bm x})) = \frac{1}{\mu}(2H + \frac{\mu}{|{\bm x}|}){\bm x} - \frac{1}{\mu}({\bm x}\cdot \dot{\bm x})\dot{\bm x}
\end{equation}
is a conserved quantity for the Kepler flow, we can re-write this equation using $A({\bm x}, \dot{\bm x})$. 
Using the Laplace Lenz Runge vector we obtain
$$
\frac{2\mu}{\Lambda}\big( A({\bm x}, \dot{\bm x}) - \frac{\bm x}{|{\bm x}|}\big) = 
-2\frac{\big({\bm x} \cdot \dot{{\bm x}}\big)\cdot \dot{{\bm x}}}{\Lambda} + 
4\Lambda {\widetilde H}{{\bm x}},
$$
where $E = {\widetilde H}\Lambda^2$. Thus equation (\ref{equation}) can be written as
\begin{equation}
 \qquad \ddot{{\bm x}} + \frac{2\mu}{\Lambda}\frac{\bm x}{|{\bm x}|} =  \frac{2\mu A}{\Lambda}.
\end{equation}

\bigskip

\subsubsection{Kepler equation parameterizing the eccentric anomaly}

An advantage of the eccentric anomaly is that it is well suited to describe Kepler motion 
in position space. Therefore we derive the equation of motion w.r.t. this parameter. \\ \\
Let us reparametrize the time as
\begin{equation}
dt = \frac{|{\bm x}|}{\epsilon} ds.
\end{equation}
Thus we obtain 
$$
\frac{d{\bm x}}{ds} = \frac{d{\bm x}}{dt}\frac{dt}{ds} = \dot{\bm x} \frac{|{\bm x}|}{\epsilon}.
$$
The second derivative yields
\[ \frac{d^2{\bm x}}{ds^2} = \frac{1}{|{\bm x}|^2} \bigg({\bm x}\cdot \frac{d{\bm x}}{ds} \bigg) \frac{d{\bm x}}{ds}
 + \frac{|{\bm x}|^2}{\epsilon^2}\ddot{\bm x} =\frac{1}{|{\bm x}|^2} \bigg({\bm x}\cdot \frac{d{\bm x}}{ds} \bigg) \frac{d{\bm x}}{ds}
- \frac{\mu}{\epsilon^2|{\bm x}|} {\bm x} \]
where we have used the Kepler equation $\ddot{\bm x} = - \dfrac{\mu {\bm x}}{|{\bm x}|^3}$.
The Laplace Lenz Runge vector
\[ \begin{split}
A({\bm x}, \dot{\bm x})) &= \frac{1}{\mu} \bigg(2E + \frac{\mu}{|{\bm x}|}){\bm x} - \frac{1}{\mu}({\bm x}\cdot \dot{\bm x} \bigg)\dot{\bm x} \\
&= -\frac{\epsilon^2}{\mu}\bigg[ \frac{1}{|{\bm x}|^2} \bigg({\bm x}\cdot \frac{d{\bm x}}{ds} \bigg) \frac{d{\bm x}}{ds} - 
\frac{\mu}{\epsilon^2}\frac{{\bm x}}{|{\bm x}|} + {\bm x} \bigg] \qquad \qquad \epsilon^2 = -2E.
\end{split} \]
Here we consider the case of negative energy, ie. bounded orbits.
Therefore we obtain
\begin{equation}
\frac{d^2{\bm x}}{ds^2} + {\bm x} = -\frac{\mu}{\epsilon^2} \bm{A}.
\end{equation}
Let us start with the Hamiltonian
$$
 \widetilde{H} = \alpha^2 = \bigg[ E - \frac{\widetilde{r}^2}2 \bigg]^2 \bigg( \sum_{n=1}^3 \widetilde{p}_n^2 \bigg)
= \frac{1}{4}\bigg[\epsilon^2 + |{\bm x}|^2\bigg]^2 |{\bm p}|^2,
$$
where we have used $2E = -\epsilon^2$. \\ \\
Define 
$$ G({\bm x},{\bm p}) =  \widetilde{H}^{1/2} =  \frac{1}{2}\big(\epsilon^2 + |{\bm x}|^2\big) |{\bm p}|.$$
We now consider regularized Kepler Hamiltonian system. The system of the Hamiltonian 
obtained from
\begin{equation}
 \widetilde{G}({\bm x},{\bm p}) =   \frac{1}{2\epsilon}\big(\epsilon^2 + |{\bm x}|^2\big) |{\bm p}| - 
\frac{\mu}{\epsilon}, \qquad \epsilon \neq 0,
\end{equation}
is given by
$$
\dot{\bm p} = |{\bm p}|{\bm x} \qquad \dot{\bm x} = -\frac{1}{|{\bm p}|^2} \big(\widetilde{G}({\bm x},{\bm p}) + 
\frac{\mu}{\epsilon}\big)
$$
By the first equation, ${\bm x} = \dfrac{\epsilon}{|{\bm p}|}\dot{\bm p}$, we obtain
$$
\ddot{\bm p} = \frac{1}{|{\bm p}|^2}({\bm p}\cdot \dot{\bm p})\dot{\bm p} - \frac{1}{\epsilon^2|{\bm p}|}
\big(\widetilde{G}({\bm x},{\bm p}) + \mu \big){\bm p}.
$$
Its restriction to the level set $\big[({\bm x},{\bm p}) | \widetilde{G}({\bm x},{\bm p}) = 0 \big]$ is flow of the Kepler problem in the 
momentum space parametrized by the eccentric anomaly.

\vspace{-0.1cm}
\section{Conclusion}

So far, we have seen that the Jacobi metric transforms the dynamics from both time-independent Lagrangian and Hamiltonian perspective from a space with potential functions to an equivalent free particle geodesic of lower dimension. All aspects of integrability and the first integrals are preserved under such lifts. The Hamiltonian and Lagrangian of such metrics possess a conformal factor and equate to unity. Such a procedure can cast the TeVeS theory into the form of a Kaluza-Klein construction \cite{lgs}.

When applied to the Kepler problem, this holds true. Such a transformation for a particular energy level combined with Bohlin's canonical transformation converts the isotropic oscillator problem to the Kepler problem. Houri's canonical transformation is found to be incomplete without Milnor's momentum inversion map, which preserves the form of geodesic flows as identical to that of the Kepler problem. \\

There are quite a few areas of Jacobi-Maupertuis have been less studied, for example, 
the Maupertuis principle can be used in the construction of the theory of many-valued 
functionals, which arises naturally in the study of the motion of charged particle in a 
scalar potential field in the presence of magnetic field \cite{Novikov}. It would be interesting to 
extended this project to the study of integrable magnetic geodesic flows \cite{iat1, iat2}. Recently 
this has been extended in \cite{BKF} to present a modern outlook to describe the mechanism of 
the Maupertuis principle using classical integrable dynamical systems. This mechanism 
yields integrable geodesic flows and integrable system associated to curved spaces. In 
fact other related topics like the formulation of the Jacobi metric for time-like geodesics 
and its application to curved space-time \cite{Gibbons}, applications of geodesic instabilities for the 
planar gravitational three-body problem \cite{KrishSena} should get more attention. The application 
of this analysis to the generalized MICZ-Kepler problem would be fascinating.

\vspace{-0.1cm}
\section*{ Acknowledgement  } 

PG would like to thank Marco Cariglia for his correspondence and many invauable comments. We are
also grateful to Tudor Ratiu, Pepin Cari\~nena, Guowu Meng and E. Harikumar for their
correspondences, collaboration and contributions to this project.


\begin{thebibliography}{999}


\bibitem{Awrej} J. Awrejcewicz, {\em Geometric Dynamics}, in {\em Classical Mechanics} (2012) 443-465, Springer New York.

\bibitem{belbruno} E.A. Belbruno, {\em Two body motion under the inverse square central force and equivalent 
geodesic flows},  Celest. Mech. {\bf 15} (4) (1977) 467.

\bibitem{Biesiada} M. Biesiada, {\em The power of the Maupertuis-Jacobi principle—Dreams and reality}, 
Chaos, Solitons \& Fractals {\bf 5}, Issue 5 (1995) 869-879.

\bibitem{bt} A.V. Bolsinov and I.A. Taimanov. { \it Integrable geodesic flows with positive topological entropy.} {Inventiones mathematicae {\bf 140} (2000): 639-650}, arXiv: \href{http://arxiv.org/abs/math/9905078v2}{9905078}.

\bibitem{BKF} A.V. Bolsinov, V.V. Kozlov and A.T. Fomenko, {\em The Maupertuis principle and geodesic flows on 
the sphere arising from integrable cases in the dynamics of a rigid body}, Russian Math. Surveys {\bf 50} (1995) 473-501.

\bibitem{carig} M. Cariglia, {\em Conformal Triality of the Kepler Problem}, J. Geometry Phys. {\bf 106} (2016) 205-209. 

\bibitem{CGG} S. Chanda, G. Gibbons and P. Guha, {\em Jacobi-Maupertuis-Eisenhart metric and geodesic flows},  	arXiv:1612.00375 [math-ph].  

\bibitem{CrM} E. Cuervo-Reyes and R. Movassagh, {\em On geometrical aspects of dynamical stability}, arXiv:0811.012.

\bibitem{Deriglazov} A.A. Deriglazov, {\em Classical mechanics:  Hamiltonian and Lagrangian formalism},
 Springer, Heidelberg, 2010.

\bibitem{Gibbons} G. Gibbons, {\em The Jacobi metric for timelike geodesics in static spacetimes}, 
Class. Quantum Grav. {\bf 33} (2016) 025004 (13pp).

\bibitem{GHZ} P. Guha, E. Harikumar and N. S. Zuhair, MICZ-Kepler systems in noncommuta-
tive space and duality of force laws, Int. J. Mod. Phys. A 29 (2014) 1450187.

\bibitem{GHZ1} P. Guha, E. Harikumar and N. S. Zuhair, Regularization of Kepler Problem in
κ-spacetime, J. Math. Phys.57, 112501 (2016); arXiv:1604.07932v2 [math-ph]

\bibitem{th} T. Houri, {\it Liouville integrability of Hamiltonian systems and spacetime symmetry}, \href{http://www.geocities.jp/football\_physicien/publication.html}{www.geocities.jp/football\_physicien/publication.html}.

\bibitem{Jovanovic} B. Jovanovic, {\em On the principle of stationary isoeneretic action}, Publications de 
l'Institut Math\'ematique, Nouvelle s\'erie {\bf 91} (2012), 63-81.

\bibitem{kbs} A. J. Keane, R. K. Barrett, and J. F. L. Simmons. \textit{The classical Kepler problem and geodesic motion on spaces of constant curvature.} \href{http://scitation.aip.org/content/aip/journal/jmp/41/12/10.1063/1.1324652}{J. Math. Phys. \textbf{41} (2000): 8108-8116.} arXiv: \href{http://arxiv.org/abs/1411.5355}{1411.5355}.

\bibitem{KrishSena} G.S. Krishnaswami and H. Senapati, {\em Curvature and geodesic instabilities in a 
geometrical approach to the planar three-body problem}, arXiv:1606.05091 [math-ph]. 

\bibitem{Krylov} N. S. Krylov, {\em Works on the Foundations of Statistical Physics}, Princeton University Press. Princeton, NJ
(1979).

\bibitem{lgs} H. Lawrence, A. Gershon and M. Schiffer. { \it Hamiltonian Map to Conformal Modification of Spacetime Metric: Kaluza-Klein and TeVeS.} \href{http://link.springer.com/article/10.1007/s10701-010-9483-6}{Foundations of Physics {\bf 41} (2011): 141-157.} arXiv: \href{http://arxiv.org/abs/0907.4242}{0907.4242}
 
\bibitem{LM} P. Libermann and C.M. Marle, {\em Symplectic Geometry and Analytical Mechanics}, Dordrecht: D. Reidel Publishing Co., 
1987.

\bibitem{mp} R. McLachlan and M. Perlmutter. { \it Conformal Hamiltonian systems.} \href{http://www.sciencedirect.com/science/article/pii/S0393044001000201}{Journal of Geometry and Physics {\bf 39} (2001): 276-300.}

\bibitem{Milnor} J. Milnor, { \it On the geometry of the Kepler problem.} \href{http://www.jstor.org/stable/2975570?seq=1#page_scan_tab_contents}{The American Mathematical Monthly {\bf 90} (1983): 353-365.}

\bibitem{Mont} R. Montgomery, {\em Who's Afraid of the Hill Boundary?}, SIGMA
{\bf 10} (2014), 101, 11 pages.

\bibitem{Moser} J. Moser, {\em Regularization of Kepler’s Problem and the Averaging Method
on a Manifold}, Comm. Pure and Appl. Math. {\bf 23} (1970), 609-636.

\bibitem{nom} S. Nair, S. Ober-Blobaum and J. E. Marsden. { \it The Jacobi-Maupertuis Principle in Variational Integrators.} \href{http://authors.library.caltech.edu/17303/}{AIP Conference Proceedings. Vol. {\bf 1168.} No. 1. American Institute of Physics, 2009.}

\bibitem{Novikov} S.P. Novikov, {\em The Hamiltonian formalism and a many-valued analogue of Morse theory}, 
Russ. Math. Surv. {\bf 37} (1982) 1.

\bibitem{Ong} O. C. Ong, {\em Curvature and mechanics}, Adv. Math. {\bf 15} (1975) 269-311.

\bibitem{Pettini} M. Pettini, {\em Geometry and Topology in Hamiltonian Dynamics and Statistical Mechanics},
 (Interdisciplinary Applied Mathematics vol {\bf 33}) 2010 (Berlin: Springer).

\bibitem{br} B. Rink, {\it Lecture notes on \textquotedblleft Geometric Mechanics and Dynamics\textquotedblright},  \href{http://www.few.vu.nl/~brink/Preview.pdf}{http://www.few.vu.nl/~brink/Preview.pdf}.

\bibitem{mls} M. L. Saggio, {\it Bohlin transformation: the hidden symmetry that connects Hooke to Newton},  \href{http://iopscience.iop.org/0143-0807/34/1/129}{2013 Eur. J. Phys. {\bf 34} (2012) 129.}

\bibitem{iat1} I. A. Taimanov, { \it Closed extremals on two-dimensional manifolds.}, \href{http://iopscience.iop.org/article/10.1070/RM1992v047n02ABEH000880/meta}{Russian Mathematical Surveys {\bf 47} (1992) 163.}

\bibitem{iat2} I. A. Taimanov, { \it On an integrable magnetic geodesic flow on the two-torus.}, \href{http://link.springer.com/article/10.1134/S1560354715060039}{Regular and Chaotic Dynamics {\bf 20} (2015): 667-678}, arXiv: \href{http://arxiv.org/abs/1508.03745v2}{1508.03745}.

\bibitem{avt} A. V. Tsiganov, {\it The Maupertuis principle and canonical transformations of the extended phase space}, \href{http://www.tandfonline.com/doi/abs/10.2991/jnmp.2001.8.1.12#aHR0cDovL3d3dy50YW5kZm9ubGluZS5jb20vZG9pL3BkZi8xMC4yOTkxL2pubXAuMjAwMS44LjEuMTJAQEAw}{J. Non-lin Math. Phys. {\bf 8} (2001) 157-182}, arXiv: \href{http://arxiv.org/abs/nlin/0101061}{nlin/0101061}.

\bibitem{Weinstein} A. Weinstein, {\em On the hypotheses of Rabinowitz’ periodic orbit theorems}, J. Differential Equa-
tions {\bf 33} (1979) 353-358.

\end{thebibliography}
\end{document}